\pdfoutput=1
\documentclass[final,onecolumn]{svjour3}       

\usepackage{amsmath}
\usepackage{enumitem}
\usepackage{parskip}

\usepackage{setspace}
\usepackage{epsfig}
\usepackage{graphicx}
\usepackage{lmodern}
\usepackage{textcomp}
\usepackage{float}
\usepackage{array}
\usepackage{listings}
\usepackage{subfigure}
\usepackage{multirow}
\usepackage{algorithm,algorithmic}
\usepackage{latexsym,amssymb,amsmath,color}

\newcommand{\dL}[1]{\frac{\partial\mathcal{L}}{\partial{{#1}}}}
\newcommand{\alennote}[1]{{\color{blue} Alen: {#1}}}

\newcommand{\R}{\mathbb{R}}
\newcommand{\X}{\mathcal{X}}
\newcommand{\D}{\mathcal{D}}
\newcommand{\GD}{\Gamma_D}
\newcommand{\GN}{\Gamma_N}

\newcommand{\Np}{{N_p}}

\renewcommand{\vec}[1]{{\mathchoice
                     {\mbox{\boldmath$\displaystyle{#1}$}}
                     {\mbox{\boldmath$\textstyle{#1}$}}
                     {\mbox{\boldmath$\scriptstyle{#1}$}}
                     {\mbox{\boldmath$\scriptscriptstyle{#1}$}}}}
\newcommand{\mat}[1]{\mathbf{{#1}}}

\newtheorem{assumption}{Assumption}

\linespread{1} \topmargin= -.5in \oddsidemargin= 0in

\evensidemargin= 0in \textwidth=6.5in \textheight=9in

\author{Hayley Guy \and Alen Alexanderian \and Meilin Yu}

\institute{Hayley Guy \at
              Department of Mathematics, North Carolina State University\\
              \email{hguy@ncsu.edu}           \\
           \and
           Alen Alexanderian \at
           Department of Mathematics, North Carolina State University\\
           \email{alexanderian@ncsu.edu}\\
           \and Meilin Yu \at
           Department of Mechanical Engineering, 
           The University of Maryland, Baltimore County\\
           \email{mlyu@umbc.edu}
}


\title{A distributed active subspace method for scalable surrogate modeling
of function valued outputs}

\smartqed  

\newenvironment{myproof}[2]{\paragraph{Proof of {#1} {#2}:}}{\qed}

\begin{document}
\setlength{\abovedisplayskip}{3pt}
\setlength{\belowdisplayskip}{3pt}

\maketitle
\begin{abstract}
We present a distributed active subspace method for training surrogate models
of complex physical processes with high-dimensional inputs and function valued
outputs. Specifically, we represent the model output with a truncated 
Karhunen--Lo\`eve (KL)
expansion, screen the structure of the input space with respect to each KL
mode via the active subspace method, and finally form an overall surrogate
model of the output by combining surrogates of individual output 
KL modes.
To ensure scalable computation of the gradients 
of the output KL modes,
needed in active subspace discovery, we rely on 
adjoint-based gradient computation. 
The proposed method combines benefits of active subspace methods for 
input dimension reduction and KL
expansions used for spectral representation of the output field. 
We provide a mathematical framework for the proposed method and 
conduct an error analysis 
of the mixed KL active subspace approach. Specifically, we provide an 
error estimate that quantifies
errors due to active subspace projection and truncated KL expansion of 
the output.
We demonstrate the numerical performance of the surrogate modeling approach
with an application example from biotransport.

\end{abstract}

\section*{Keywords}
Distributed active subspace; Karhunen--Lo\`eve expansion; Dimension reduction; Function valued outputs; Porous medium flow; Biotransport.

\section{Introduction}\label{sec:intro}

Models with uncertain input parameters are common in modeling of complex
systems.  Computational studies 
such as forward uncertainty propagation,
optimization, or parameter estimation 
require repeated
evaluation of the model.  These tasks become challenging for
expensive-to-evaluate complex models.  To address this challenge, surrogate models
are often used. By approximating the mapping from the uncertain input
parameters to output quantities of interest (QoIs), using a surrogate model,
one can replace expensive model evaluations by inexpensive surrogate model
evaluations.  Examples of surrogate modeling tools include 
polynomial chaos expansion~\cite{GhanemSpanos90,LeMaitreKnio10}, multivariate
adaptive regression splines~\cite{friedman93}, and Gaussian
processes~\cite{RasmussenWilliams06}.

In the present work, we consider surrogate construction for models of the form
\begin{equation}\label{equ:model}
   y = f(\vec{x}, \vec{\xi}), 
\end{equation}
where $\vec{x}$ belongs to a spatial domain and $\vec{\xi} \in \R^\Np$ is a vector of
uncertain parameters.  In our target applications 
$f$ is defined in terms of the solution of a partial differential equation (PDE) that 
is parameterized by a high-dimensional input parameter. Albeit,
the proposed framework can be adapted to more general settings.

A simple approach is 
to
construct a surrogate model pointwise in $\vec{x}$. 
Specifically, discretizing the spatial domain by grid points
$\{\vec{x}_i\}_{i=1}^{N_x}$, we may consider
approximating 
$f(\vec{x}_i, \vec\xi)$, by 
constructing surrogate models $\hat{f}_i(\vec\xi) \approx 
f(\vec{x}_i, \cdot)$, $i = 1, \ldots, N_x$.
This straightforward approach can be useful in some cases, however, the power
of many of the surrogate modeling approaches can be fully realized if one
optimizes them for each $\vec{x}_i$ in the computational grid.
This is a computationally expensive task and might become
prohibitive if the parameter dimension $\Np$ is large (e.g., $\Np \geq 100$),
the model exhibits large variations in its response to 
parametric uncertainties over the spatial domain, or
the grid point number $N_x$ is large, as can happen in two or three-dimensional geometries.

We seek an efficient surrogate modeling approach, for models of the
form~\eqref{equ:model}, whose complexity in terms of the number of model
evaluations does not scale with the dimension $\Np$ of the input parameters.
To enable this, we
need a method that is \emph{structure exploiting}. In particular, we seek to
discover and utilize informative low-dimensional subspaces in the input space
and low-dimensional spectral representations of the model output. The former uses ideas
from active subspace methods~\cite{Constantine15} and the latter uses Karhunen--Lo\`eve (KL) 
expansions~\cite{Loeve77,GhanemSpanos90}. 
This proposed approach decouples the spatial (i.e.
$\vec{x}$) dimensions and those of the random variable $\vec{\xi}$ and exposes
important structures that can be used for building efficient surrogate models. 

\textbf{Related work}.
KL expansions have been used in many works for representing random field
parameters in physics models; see
e.g.,~\cite{Ghanem98,LeMaitreReaganNajmEtAl02,XiuKarniadakis03,LeMaitreKnio04,BabuvskaNobileTempone07,Doostan07,SaadGhanem09,MatthiesKeese05,Graham15KuoKuoNicholsEtAl15,Elman17}.
KL expansions can also be used to represent random field outputs of physics
models, as done in the present work. In models governed by PDEs the output
field often exhibits favorable regularity properties and can be represented
by a KL expansion with a relatively small number of KL terms.  Examples of
this appear for instance in our recent
works~\cite{CleavesAlexanderianGuyEtAl19,ARSY2019} for models governed by
elliptic PDEs.

The active subspace method~\cite{Russi10,constantine2014,Constantine15} has
become a popular approach in recent years for input parameter dimension
reduction and surrogate modeling.  This method seeks to identify important
linear combinations of the input parameters.  
A brief summary of the active subspace method, for approximating 
scalar valued models, has been provided in Section~\ref{sec:background}.
The active subspace method has
been successfully used for uncertainty analysis in models with scalar valued
responses, in a host of engineering applications; examples include scramjet
analysis~\cite{Scramjet15}, wing shape optimization~\cite{ShapeOpt14},
hydrologic modeling~\cite{ConstantineGeo15}, battery
modeling~\cite{ConstantineBattery17}, and kinetic model uncertainty
analysis~\cite{ji2018shared}.  Recently, there have 
been efforts in  
extending the active subspace method to vector valued functions.
In~\cite{zahm2018gradient} the authors find an upper bound for the error of a
ridge function approximation of the original (vector valued) function, and construct an 
approximating function by minimzing that upper bound.
In~\cite{ji2018shared}
an interesting approach is introduced for simultaneously approximating multiple
outputs using a single low-dimensional shared subspace.  The shared subspace is
identified by solving a least-squares system to compute an appropriate
combination of single-output active subspaces.

We also mention another related and popular class of methods for parameter
dimension reduction: variance based~\cite{Sobol:1990,Sobol:2001} and derivative
based~\cite{SobolKucherenko09,KucherenkoIooss17} global sensitivity analysis
(GSA). These methods provide means of identifying unimportant model inputs,
hence reducing the dimension of the input parameter vector.  Derivative based
methods are especially attractive, as they can be used to efficiently screen
for unimportant input parameters, after which a surrogate model can be computed
as a function of a reduced set of parameters.  While GSA approaches have
traditionally been applied to models with scalar outputs, extensions of GSA
methods to vectorial and function valued outputs appear in several recent
works~\cite{GamboaJanonKleinEtAl14,AlexanderianGremaudSmith17,CleavesAlexanderianGuyEtAl19}.

In practice, the active subspace method tends to be very effective in reducing
the input dimension as important \emph{directions} in the input parameter space are
identified. This is, in contrast to seeking 
reduced \emph{parameter subsets}, as identified by GSA
approaches. 
%
%
We note however that utility of GSA goes beyond input parameter dimension
reduction---GSA provides valuable insight regarding a model by identifying key
contributors to model variability. Also, we mention an interesting link between
derivative based GSA and active subspaces through the idea of activity scores
as detailed in~\cite{ConstantineDiaz17}; computing active subspaces provides,
as a byproduct, approximations to derivative based global sensitivity
measures, for scalar valued models.


 
\textbf{Our approach and contributions}.
%
In our proposed approach, we combine active subspaces and KL expansions
to enable efficient surrogate modeling for function valued QoIs.
Specifically, we consider a suitably truncated KL decomposition 
\[
\begin{array}{cc}
f(\vec{x}, \vec{\xi}) \approx  \bar{f}(\vec{x}) +
\sum\limits_{k = 1}^N \sqrt{\lambda_k} f_k(\vec{\xi}) \phi_k(\vec{x}), \\
\end{array}
\]
where $\bar{f}$ is the mean of the process $f(\vec{x}, \vec\xi)$ and
$(\lambda_k, \phi_k)$ are the eigenpairs of the covariance operator of the
process. This way, the model uncertainty is encoded in the KL modes, $f_k$, $k
= 1, \ldots, N$. In many cases a small $N$ can be very effective in
approximating $f$. These KL modes can then be approximated efficiently using
the active subspace approach.  The active subspace approach requires the gradients of
$f_k$, which are scalar functions of the random vector $\vec\xi$. For such
functions, adjoint state approaches provide efficient means of computing the
required gradients. Specifically, gradients can be computed as a cost, in terms
of model evaluations, that does not scale with the dimension of the input
parameter $\vec\xi$. Realizing the proposed approach requires (i) 
a rigorous functional framework upon which efficient 
computational algorithms can be built; and (ii) a systematic 
computational procedure that guides efficient implementations. 
These have been detailed in Section~\ref{sec:method}.


We deploy our proposed framework in the context of flows in biological tissues.
Specifically, in Section~\ref{sec:numerics},
we tackle biotransport in porous tumors with high-dimensional random
inputs (i.e., the permeability field) and random field outputs (i.e.,
the pressure distribution in tumors).
Biological tissues usually have highly heterogeneous, even uncertain, 
material properties~\cite{Frontier:16}. 
The biotransport process such as drug delivery
in tissues thus can exhibit ``unpredictable" behaviors due to the uncertainties
in tissue material properties~\cite{Deb:09CPD}. Since the unpredictability can
adversely affect the effectiveness of therapy, quantifying variability in
biotransport due to uncertain material properties has a high impact on clinical
trial protocol design. However, conducting useful uncertainty quantification
studies on biotransport is challenging due to the high dimension of the
input parameter space, e.g. permeability and
porosity~\cite{AlexanderianZhuSalloumEtAl17}. We find that
efficient-to-evaluate and accurate surrogate models can be computed at a modest
computational cost, with our proposed framework. The presented computational
experiments also indicate that the present method can be used successfully in
more general porous medium flow problems, where Darcy flow is a reasonable
model. Further investigations of the proposed surrogate modeling method in more complex flow
models is subject of our future work.


\renewcommand{\labelitemi}{$\circ$}
The contributions of this article are summarized as follows:
\begin{itemize}
\item We present a mathematical framework and computational method for surrogate modeling
for models with high-dimensional inputs and function valued outputs that
uses 
KL expansions for output dimension reduction, 
active subspaces for input dimension reduction, 
and adjoint based gradient computation
as needed in the active subspace method.

\item We analyze the errors due to active subspace projection and output
KL truncation. The presented analysis provides insight on the interplay 
between these two important sources of errors.

\item We present comprehensive computational results in the context of a
biotransport application problem that test various aspects of the proposed surrogate modeling
method. We mention that the computational studies, to our knowledge,
are the first of a kind in the area of biotransport modeling in tissues with uncertain
heterogeneous material properties.

\end{itemize}

\section{Background on active subspace}\label{sec:background}
One common approach to input dimension reduction is to identify 
a subset of input model parameters 
that are the most important to model variability. This is done 
typically using a local or global sensitivity analysis approach~\cite{KucherenkoIooss17,IoossSaltelli17,PrieurTarantola17}.
The active subspace approach is different; it identifies a set of important
directions in the input parameter space rather than giving importance to one
input over another.  We can rotate the coordinates to align with the directions
of strongest variation. Each direction can be considered as a set of weights
that define a linear combination of all of the inputs. Directions where the
inputs do not vary much are ignored. 

Consider a model 
\[
y = g(\vec{\xi}), \quad \vec{\xi} \in \R^\Np.
\] 
We assume the uncertain parameters $\vec{\xi}$ have an associated probability 
density function $\pi(\vec{\xi})$ that is supported on $\Omega \subseteq \R^\Np$.
Assume $g$ is square integrable and has 
continuous partial derivatives with respect to $\vec{\xi}$.
The following matrix plays a key 
role in active subspace construction:
\begin{equation}\label{equ:matS}
\mat{S} =
\int_\Omega 
\nabla g(\vec\xi)\,\nabla
g(\vec{\xi})^T \pi(\vec\xi) d\vec{\xi},
\end{equation}
where $\nabla g$ is the gradient of $g$.
The matrix $\mat{S}$ is symmetric and positive semi-definite with spectral
decomposition 
\[
\mat{S} = \mat{W}\mat{\Lambda}\mat{W}^T, \quad \mat{\Lambda} = \text{diag}(\lambda_1
,\cdots,\lambda_\Np ). 
\]
The eigenvalues $\lambda_i$'s are sorted in descending order $\lambda_1 \geq
\cdots \geq \lambda_\Np \geq 0$ and $\mat{W}$ contains the orthonormal
eigenvectors of $\mat{S}$. 

The active subspace is determined by the dominant eigenvectors of $\mat{S}$.
Specifically, 
we partition the eigenvalues and eigenvectors 
according to
\begin{equation}\label{equ:part}
\mat{\Lambda}=
\begin{bmatrix} \mat{\Lambda}_1 & \\ & \mat{\Lambda}_2 \end{bmatrix}, \quad \mat{W}=\begin{bmatrix} \mat{W}_1 & \mat{W}_2 \end{bmatrix},
\end{equation}
where 
$\mat{\Lambda}_1 \in \R^{r \times r}$ is a diagonal matrix with the dominant 
eigenvalues on its 
diagonal; and $\mat{W}_1$ contains the corresponding
eigenvectors $\vec{w}_1, \ldots \vec{w}_r$ as its columns. 
These columns span the dominant
eigenspace of $\mat{S}$---\emph{the active subspace}.  

Given
$\vec{\xi}$ we define $\vec{y} = \mat{W}_1^T\vec{\xi} \in \R^r$ and $\vec{z} =
\mat{W}_2^T \vec{\xi} \in \R^{\Np-r}$, and note that
\[
    \vec{\xi} = \mat{W}_1\mat{W}_1^T \vec{\xi} + \mat{W}_2\mat{W}_2^T \vec{\xi}
    = \mat{W}_1 \vec{y} + \mat{W}_2 \vec{z}.
\]

The elements of $\vec{y}$ and $\vec{z}$ are the sets
of active and inactive variables, respectively. 
As discussed in 
detail in~\cite{constantine2014}, the active variables, i.e., elements of 
$\vec{y}$,
are responsible for most of the variations of the function $g$. 

We can consider approximating $g$ in the active subspace, 
$g(\vec{\xi}) \approx 
g(\mat{W}_1\mat{W}_1^T\vec{\xi})$.
It is convenient to define the function $G:\R^r \to \R$ by
\[
G(\vec{y}) = g(\mat{W}_1\vec{y}), \quad \vec{y} \in {\R}^{r},
\] 
and write the approximation to $g$ as 
\[
g(\vec{\xi}) \approx 
G(\mat{W}_1^T \vec{\xi}).
\]

In practice, the function $G$ is typically approximated by a surrogate model.
Specifically, we 
can compute a polynomial regression fit, which we denote by 
$\tilde{G}(\vec{y})$, for the function $G(\vec{y})$. 
Moreover, Monte Carlo sampling is used to approximate 
the matrix $\mat{S}$ in~\eqref{equ:matS}:
\[
\mat{S} \approx \frac{1}{N_s} \sum_{i=1}^{N_s} 
\nabla g(\vec\xi_i) \,\nabla  g(\vec\xi_i)^T. 
\]
Typically, a modest Monte Carlo sample is sufficient for computing
reliable approximations to the dominant eigenvalues and 
eigenvectors of $\mat{S}$.

For readers' convenience, we provide the steps for computing an 
active subspace-based surrogate model
for the function $g$ in Algorithm~\ref{alg:AS}. For further details,
we refer the readers to~\cite{constantine2014}.
\begin{algorithm}
\caption{Computation of an active subspace-based surrogate model for a scalar-valued 
function $g$.}
\label{alg:AS}
\setstretch{1.35}

\begin{algorithmic}
\REQUIRE A set of $N_s$ data points $(\vec\xi_i, g(\vec\xi_i)), i=1,\ldots ,N_s$,
drawn from the law of $\vec\xi$.
\ENSURE Surrogate model $ \tilde{G}(\mat{W}_1^T \vec{\xi})\approx g(\vec\xi) $ 
\STATE Compute the gradients $\vec{D}_i = \nabla g(\xi_i)$, $i = 1, \ldots, N_s$ 
\STATE Compute 
\[
\mat{S} = \frac{1}{N_s} \sum_{i=1}^{N_s} \vec{D}_i \vec{D}_i^T.
\]
\STATE Compute spectral decomposition $\mat{S} =  \mat{W}\mat\Lambda \mat{W}^T$.  
\STATE Based on decay of the eigenvalues, determine the dimension $r$ 
of the active subspace, 
and partition $\mat\Lambda$ and $\vec{W}$ as in~\eqref{equ:part}.

\STATE Compute $\vec{y}_i = \mat{W}_1^T\vec\xi_i$, $i = 1,\ldots, N_s$. 
\STATE Compute a regression fit $\tilde{G}(\vec{y})$ to $G(\vec{y})$ using the 
       data points $(\vec{y}_i, g(\vec\xi_i)), i=1,\ldots,N_s$.
\end{algorithmic}
\end{algorithm}

\section{Active subspace-based surrogate models for function valued QoIs}
\label{sec:method}
In this section, we outline our approach for computing 
an active subspace-based surrogate model for a function valued QoI.
Specifically, we consider models of the form 
\begin{equation}\label{equ:qoi}
y = f(\vec{x}, \vec{\xi}), \quad \vec{x} \in \X, \vec{\xi} \in \Omega,
\end{equation}
where $\X \subset
\R^n$, with $n=2$ or $3$, is a compact set, and
$\Omega \subset \R^\Np$ is a sample space. 
The set $\X$ will be a (sub-) region of a computational domain, 
in our target applications.
Here $\vec{\xi}=(\xi_1, \xi_2, \ldots ,\xi_{\Np})^T$ is
a vector of uncertain parameters, with probability 
density function $\pi(\vec{\xi})$.
We make the following assumptions about $f(\vec{x}, \vec{\xi})$.
\begin{assumption} 
We assume
\renewcommand{\labelenumi}{(\alph{enumi})}
\begin{enumerate}
\item $f \in L^2(\X \times \Omega)$ and $f$ is a mean square continuous process.
\item $\frac{\partial f}{\partial \xi_i}(x,\vec{\xi})$ 
exists for all $x \in \X, \vec{\xi} \in \Omega, i=1,\ldots,\Np$.
\item $\frac{\partial f}{\partial \xi_i}(x,\vec{\xi}) \in 
L^2(\X \times \Omega), i=1,\ldots,\Np$.
\end{enumerate}
\end{assumption}

Letting $\mathbb{E}[\cdot]$ denote expectation with respect 
to $\vec\xi$, the covariance function, $c:\X \times \X \rightarrow \R$, of $f$ 
is defined as
\[
c(\vec{x},\vec{y}) := \mathbb{E}[f(\vec{x},\cdot)f(\vec{y},\cdot)]-
   \mathbb{E}[f(\vec{x},\cdot)]\mathbb{E}[f(\vec{y},\cdot)]
\]
and the associated covariance operator 
$C_f:L^2(\X)\rightarrow L^2(\X)$ is given by 
\[
[C_f u](\vec{x}) := \int_\X c(\vec{x},\vec{y})u(\vec{y})d\vec{y}.
\]
Assumption 1(a) ensures that the covariance function $c$ and the process mean
\[
   \bar{f}(\vec{x}) = \mathbb{E}[f(\vec{x}, \xi)]
\]
are continuous on $\X \times \X$ and $\X$, respectively; 
see e.g.,~\cite[Theorem 7.3.2]{HsingEubank15}.

\subsection{Computational method}
We reduce the dimension of the output by 
computing a low-rank approximation using  
a truncated KL expansion:
\[
\begin{array}{cc}
f(\vec{x}, \vec{\xi}) \approx \hat{f}(\vec{x}, \vec{\xi}) =  \bar{f}(\vec{x}) +
\sum\limits_{k = 1}^N \sqrt{\lambda_k(C_f)} f_k(\vec{\xi}) \phi_k(\vec{x}). 
\end{array}
\]
Here $\lambda_k(C_f)$ and $\phi_k$ are the eigenvalues and eigenvectors of 
the covariance operator $C_f$ 
and $f_k$'s are given by
\[
   f_k(\vec{\xi}) = \dfrac{1}{\sqrt{\lambda_k(C_f)} }\int_\X 
(f(\vec{x}, \vec{\xi})-\bar{f}(\vec{x}))\phi_k(\vec{x}) d\vec{x}.
\]
We refer to the coefficients $f_k$ as the \emph{KL modes} of $f$. 
In the present work, the (generalized) eigenvalue problem, 
\begin{equation}\label{equ:eig_prob}
    C_f \phi_k = \lambda_k(C_f) \phi_k, \quad \int_\X \phi_k(\vec{x})^2\,
                 d\vec{x} = 1, \quad k = 1, 2, \ldots, 
\end{equation}
is solved using
Nystr\"{o}m's method. 
In practice, $N$ can be chosen such that
$(\sum_{k=1}^N \lambda_k(C_f))/(\sum_{k=1}^\infty \lambda_k(C_f)) < 
\mathrm{tol}$, where 
$\mathrm{tol}$ is a user-specified tolerance. 
In many applications of interest, where $f$ is
defined in terms of the solution of a differential equation, 
the eigenvalues $\lambda_k(C_f)$ exhibit
rapid decay, which enables low-rank representations. 
In particular, this is observed in the
application problem considered in the present work.

The next step is to approximate the KL modes
$f_k(\vec{\xi})$ using active subspaces~\cite{constantine2014}.
For each KL mode $f_k$, $k=1,\ldots,N$, we compute an active subspace
by considering  
the symmetric positive semidefinite matrix, $\mat{S}_k \in \R^{\Np \times \Np}$,
defined by 
\begin{equation}\label{equ:Sk}
\mat{S}_k = \int_{\R^\Np} \big(\nabla f_k(\vec\xi)\big)\big(\nabla f_k(\vec\xi)\big)^T \, \pi(\vec{\xi})d\vec{\xi}.
\end{equation}
As before, we compute the spectral decomposition $\mat{S}_k =
\mat{W}_k\mat{\Lambda}_k\mat{W}_k^T$, and partition the eigenpairs 
%
according to
\[\mat{\Lambda}_k=
\begin{bmatrix} \mat{\Lambda}_{k,1} & \\ & \mat{\Lambda}_{k,2} \end{bmatrix},
\quad
\mat{W}_k=\begin{bmatrix} \mat{W}_{k,1} & \mat{W}_{k,2} \end{bmatrix},\]
where $\mat{\Lambda}_{k,1} \in \R^{r_k \times r_k}$ is 
a diagonal matrix with the dominant eigenvalues of $\mat{S}_k$ 
on its diagonal; and $\mat{W}_k$ contains the corresponding eigenvectors.
%
Defining, 
\[
G_k(\vec{y}) = f_k(\mat{W}_{k,1}\vec{y}), \quad \vec{y} \in {\R}^{r_k}, 
k = 1, \ldots, N,
\] 
the KL modes can be approximated by 
\[
f_k(\vec{\xi}) \approx  G_k(\mat{W}_{k,1}^T \vec{\xi}), \quad
k = 1, \ldots, N.
\]
In practice, 
the active subspace-based surrogates for the KL modes $f_k$ 
are constructed by following Algorithm~\ref{alg:AS}, with $g$ replaced by
$f_k$, $k = 1, \ldots, N$. 
Thus, we obtain surrogate models 
\[
    f_k(\vec\xi) \approx \tilde{G}_k(\mat{W}_{k,1}^T\vec\xi), \quad k = 1, \ldots, N.
\]

The overall surrogate model for $f(\vec{x},\vec{\xi})$ is then given by 
\begin{equation}\label{equ:overall}
f(\vec{x}, \vec{\xi}) \approx 
\hat{f}(\vec{x}, \vec{\xi}) =
\bar{f}(\vec{x}) + \sum\limits_{k = 1}^N \sqrt{\lambda_k}
\tilde{G}_k(\mat{W}_{k,1}^T\vec{\xi}) \phi_k(\vec{x}).
\end{equation}
The computational steps for computing $\hat{f}(\vec{x}, \vec\xi)$ can be
divided into three main steps:
\begin{enumerate}
\item Compute the truncated KL expansion of
$f(\vec{x}, \vec\xi)$. 

\item Compute the active subspace-based approximation to KL modes $f_k$, 
$k = 1, \ldots, N$.

\item Form the overall surrogate model as in~\eqref{equ:overall}.
\end{enumerate}

The most computationally challenging part of the above process is the
first step, in which we require an ensemble of model evaluations
$f(\cdot, \vec\xi_i)$, $i = 1, \ldots, N_s$. These model evaluations
will be used to compute the KL expansion of the model $f(\vec{x},\vec\xi)$.
For this, we use Algorithm~1 in~\cite{ARSY2019}, 
that uses Nystr\"{o}m's method to compute $\lambda_k(C_f)$ and
the corresponding eigenvectors $\phi_k(\cdot)$, $k = 1, \ldots, N$.
This process also provides the evaluations of the KL modes, 
   $f_k(\vec\xi_i), k = 1, \ldots, N, \, i = 1, \ldots, N_s$.
As shown in our numerical results, often a modest choice of $N_s$ is
sufficient for obtaining reliable approximations to 
(i) the dominant
eigenpairs, and (ii) KL modes $f_k$.

Notice that for implementing the proposed method, differentiability of
the KL modes is required. Moreover, as seen below, for the 
purposes of error analysis, Lipschitz continuity of the
output KL modes is needed. These requirements can be satisfied
through suitable boundedness assumptions on the partial derivatives of
$f(\vec{x}, \vec\xi)$, as we now explain.
For convenience, and with no loss of generality,
we consider the case where $\bar{f}(\vec{x}) \equiv 0$ and consider 
\begin{equation}\label{equ:bddness}
 F(\vec{\xi}) = \int_\D 
f(\vec{x}, \vec{\xi})v(\vec{x}) d\vec{x},
\end{equation}
where we use a generic element $v \in L^2(\D)$, 
with $\| v \|_{L^2(\D)} = 1$, in place of the eigenvectors $\phi_k$.
This $F$ can be thought of as a generic \emph{unnormalized} KL mode.
In addition to the earlier assumptions on $f$, we also require 
\[
    \left| \frac{ \partial f}{\partial \xi_j} (\vec{x}, \vec{\xi}) \right| \leq b_j(\vec{x}),
    \quad \text{for all } \vec{x} \in \D, \vec{\xi} \in \Omega,
\]
where $b_j$, $j = 1, \ldots, N$, are square integrable.  
\begin{lemma}\label{lem:Lipschitz}
Suppose the process $f(\vec{x}, \vec\xi)$ satisfies Assumption~1 and~\eqref{equ:bddness}.
Then, $F$ is differentiable and is Lipschitz continuous.
\end{lemma}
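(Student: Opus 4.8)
The plan is to establish differentiability first by differentiating under the integral sign in~\eqref{equ:bddness}, then to obtain the Lipschitz bound from a uniform gradient bound via the mean value theorem (or, more carefully, via integration along line segments in $\Omega$, which sidesteps convexity concerns by working componentwise).

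\textbf{Step 1: Differentiability.} For each $j$, I would show $\partial F/\partial\xi_j$ exists and equals $\int_\D \frac{\partial f}{\partial \xi_j}(\vec{x},\vec\xi)\,v(\vec{x})\,d\vec{x}$. The justification is the standard Leibniz rule for differentiation under the integral: Assumption~1(b) gives pointwise existence of $\partial f/\partial\xi_j$, and the hypothesis $|\partial f/\partial\xi_j(\vec{x},\vec\xi)| \le b_j(\vec{x})$ with $b_j \in L^2(\D) \subset L^1(\D)$ (since $\D$ is bounded, being a subregion of the compact $\X$) provides the integrable dominating function needed to pass the limit in the difference quotient through the integral via dominated convergence. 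Continuity of $\partial F/\partial\xi_j$ in $\vec\xi$ — hence $F \in C^1$ — would follow similarly if one additionally assumes continuity of $\partial f/\partial\xi_j$ in $\vec\xi$; if only measurability is available, differentiability of $F$ still holds, which is all the statement claims.

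\textbf{Step 2: Lipschitz continuity.} Having $\nabla F(\vec\xi) = \int_\D \nabla_{\vec\xi} f(\vec{x},\vec\xi)\,v(\vec{x})\,d\vec{x}$, I bound $|\partial F/\partial\xi_j(\vec\xi)|$ by Cauchy--Schwarz: $|\partial F/\partial\xi_j(\vec\xi)| \le \|b_j\|_{L^2(\D)}\,\|v\|_{L^2(\D)} = \|b_j\|_{L^2(\D)}$, uniformly in $\vec\xi$. Hence $\|\nabla F(\vec\xi)\| \le L := \big(\sum_{j=1}^{\Np}\|b_j\|_{L^2(\D)}^2\big)^{1/2}$ for all $\vec\xi \in \Omega$. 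Then for $\vec\xi,\vec\eta \in \Omega$ with the segment between them in $\Omega$, $F(\vec\xi) - F(\vec\eta) = \int_0^1 \nabla F(\vec\eta + t(\vec\xi-\vec\eta))\cdot(\vec\xi-\vec\eta)\,dt$, so $|F(\vec\xi)-F(\vec\eta)| \le L\,\|\vec\xi-\vec\eta\|$. If $\Omega$ is not convex one works coordinate-by-coordinate along an axis-parallel path, picking up a dimension-dependent constant, or simply states the result on convex subsets; in the target application $\Omega$ is a box, so convexity is available.

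\textbf{Main obstacle.} The only genuine subtlety is justifying the interchange of differentiation and integration rigorously — specifically producing the dominating function uniformly on a neighborhood of $\vec\xi$, which the bound $|\partial f/\partial\xi_j| \le b_j(\vec{x})$ supplies directly since it is already uniform in $\vec\xi$. Everything else is routine: Cauchy--Schwarz for the gradient bound and the fundamental theorem of calculus along segments for the Lipschitz estimate. I would also note explicitly that each KL mode $f_k$ is $F$ with $v = \phi_k$ (up to the factor $1/\sqrt{\lambda_k(C_f)}$), so the lemma transfers to all $f_k$, $k=1,\ldots,N$, with Lipschitz constants scaled by $\lambda_k(C_f)^{-1/2}$.
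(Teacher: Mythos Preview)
Your proposal is correct and follows essentially the same route as the paper: differentiate under the integral sign using $b_j$ as the dominating function, bound $|\partial_j F|$ via Cauchy--Schwarz to get the uniform gradient bound, and conclude Lipschitz continuity with constant $L = \big(\sum_j \|b_j\|_{L^2(\D)}^2\big)^{1/2}$. One small side remark: in the paper's target application $\vec\xi$ is standard Gaussian, so $\Omega = \R^{\Np}$ rather than a box, but this is still convex and your argument goes through unchanged.
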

\begin{proof}
Differentiability of $F$ 
can be shown using the standard arguments of differentiating 
under the integral sign; see e.g.,~\cite{CleavesAlexanderianGuyEtAl19}, 
where it is shown that under the present set of assumptions, 
$\frac{\partial F}{\partial \xi_j} = 
\int_\D \frac{ \partial f}{\partial \xi_j} (\vec{x}, \vec{\xi}) v(\vec{x})\, d\vec{x}$.
We also have 
\begin{equation*}
        |\partial_j F(\vec{\xi})|  \leq \int_\D |\partial_j f(\vec{x}, \vec{\xi})v(\vec{x})|d\vec{x}
\leq \int_\D |b_j(\vec{x})| |v(\vec{x})| \, d\vec{x} \leq \| b_j\|_{L^2(\D)} \| v \|_{L^2(\D)}, 
\end{equation*}
for every $\vec{x} \in \D$.
Using this, it is straightforward to show, 
$| F(\vec{\xi}_1) - F(\vec{\xi}_2) | \leq L \| \vec{\xi}_1 - \vec{\xi}_2 \|$,
for all $\vec{\xi}_1, \vec{\xi}_2 \in \Omega$, 
with $L = \left(\sum_{j=1}^N \| b_j \|_{L^2(\D)}^2\right)^{1/2}$. \qed
\end{proof}

\subsection{Gradient computation}
The present active subspace-based surrogate modeling approach
requires computing the gradient
of the output KL modes. Here, it is more convenient to consider
the ``unnormalized KL modes'', $F_k$, $k = 1, \ldots, N$.
\begin{equation}\label{equ:KL_mode_unnormal}
   F_k(\vec{\xi}) = \int_\X 
(f(\vec{x}, \vec{\xi})-\bar{f}(\vec{x}))\phi_k(\vec{x}) d\vec{x}.
\end{equation}
Finite-difference approximations provide a simple approach, but
will be prohibitive when the input dimension $\Np$ is large and model evaluations are
expensive. 
For field quantities
defined in terms of the solution $u$ of a PDE parameterized by $\vec\xi$, $F_k$ is a functional 
of $u$. This is
an ideal situation for deploying adjoint based gradient computation. 
To illustrate this, we assume the model $f(\vec{x}, \vec\xi)$ is 
a function of the solution $u(\vec{x}, \vec\xi)$ of a PDE. For instance
$f$ can be the restriction of $u$ to a sub-domain, or $f$ can be flux of $u$ through 
a boundary. For illustration, we consider the case where 
the PDE is of the form $\mathcal{A}(\vec\xi) u = b$, where $\mathcal{A}$ denotes 
a differential operator that is parameterized by the uncertain parameter vector $\vec\xi$ 
and $b$ is a source term, and assume 
$f(\vec{x}, \vec\xi) = u(\vec{x},\vec{\xi})$. To simplify the presentation further, 
we consider the discretized problem, where the discretized KL mode is defined by
\[
\vec{F}_k(\vec\xi) = (\vec{u} - \bar{\vec{u}})^T \mat{W} \vec{\phi}_k,
\quad
\text{where}
\quad
\mat{A}(\vec\xi) \vec{u} = \vec{b}.
\]
Here $\vec{u}$ is the discretized state variable, $\mat{W}$ is a diagonal 
matrix with quadrature weights on diagonal, $\vec{\phi}_k$ is the discretized
$k$th eigenvector of output covariance, 
$\mat{A}$ is the discretized
PDE operator, and $\vec{b}$ is the discretized source term.
Computing the gradient of $\vec{F}_k$ with respect to $\vec\xi$ can be 
done using a standard Lagrangian formalism~\cite{Gunzburger03}.
Namely, we define the Lagrangian 
\[
\mathcal{L}(\vec{u}, \vec{\xi}, \vec{q}) = 
(\vec{u} - \bar{\vec{u}})^T \mat{W} \vec{\phi}_k
+ \vec{q}^T \big(\mat{A}(\vec{\xi})\vec{u} - \vec{b}\big),
\] 
where $\vec{q}$ is a Lagrange multiplier. Setting 
$\dL{\vec{q}} = 0$ recovers the state equation, and setting
$\dL{\vec{u}} = 0$ gives the adjoint equation
\[
   \mat{A}^T(\vec\xi)\vec{q} = -\mat{W}\vec{\phi}_k.
\]
Then, the gradient of $\vec{F}_k$ is given by
\[
   \nabla \vec{F}_k(\vec\xi)^T = \dL{\vec\xi} = \vec{q}^T 
   \frac{\partial \mat{A}(\vec\xi)}
   {\partial{\vec{\xi}}}\vec{u}.
\]
Note that evaluation of $\nabla \vec{F}_k$ requires one state (forward)
equation solve and one adjoint equation solve, independently of the dimension
of the uncertain parameter vector $\vec\xi$.  (The forward solves can be reused
across the output KL modes.) Thus, to compute $\nabla \vec{F}_k$, $k = 1,
\ldots, N$, we need one solution of the state equation, and $N$ adjoint solves.
A small $N$, is often sufficient for suitable representations of the output
field due to the rapid decay of the
eigenvalues of the output covariance operator, observed in many applications.

Computing the active subspaces for the output KL modes 
can be done with a set $N_s$ of model evaluations, used across all KL modes.
Thus, the computational cost of active subspace discovery for the output
KL modes is $N_s(1 + N)$, independent of the parameter dimension $\Np$. 
Typically a modest $N_s$ is sufficient, as seen in our numerical results.
Furthermore, the same set of model evaluations can be used for surrogate model
construction for the output KL modes. 
This enables efficient computation of active subspaces for individual KL modes, at 
a cost that does not scale with the dimension of $\vec\xi$.

Notice that the present illustration uses an equation $\mat{A}(\vec\xi)\vec{u} =
\vec{b}$, which is linear in the state variable and nonlinear in the uncertain
parameter $\vec\xi$.  Such an equation can result from discretizing linear (in
state) PDEs that are parameterized by uncertain parameters. The adjoint
approach can more generally be applied to nonlinear PDE models; see
e.g.,~\cite{Gunzburger03}. 

In the present work, we use adjoint based gradient computation for computing the 
gradient of the output KL modes for models governed by elliptic PDEs with a random
coefficient function; see section~\ref{sec:numerics}.

\subsection{Error analysis}
The presented computational strategy involves several approximations
for computing the active subspace-based surrogate for $f(\vec{x}, \vec\xi)$.
In this section, we analyze the errors incurred due to (i) truncation of the 
output KL expansion and (ii) active subspace approximation of
the KL modes. For the purposes of the presented analysis, it is more convenient to work with
unnormalized KL modes~\eqref{equ:KL_mode_unnormal}, and consider 
\begin{equation}\label{equ:KLE_alt}
   f_N(\vec{x}, \vec\xi) = \bar{f}(\vec{x}) + \sum_{k=1}^N F_k(\vec\xi) \phi_k(\vec{x}). 
\end{equation}

The active subspace strategy is flexible regarding the distribution law of 
the random vector $\vec{\xi}$; see e.g.,~\cite{Constantine15}. However, in the present
work, where we consider models with uncertain coefficient functions that 
are modeled using log-Gaussian random fields, 
$\vec\xi$ is a standard Gaussian random vector, i.e., $\vec\xi \sim \mathcal{N}(\vec{0},
\mat{I})$. 
 
As detailed in~\cite{Constantine15}, while Algorithm~\ref{alg:AS}
provides a practical surrogate modeling framework, it is not directly amenable
to theoretical analysis. To enable error analysis, following~\cite{Constantine15}, we define the 
functions $G_k$, used for active subpace projection of the KL modes
in the following way:
\begin{equation}\label{equ:Gk}
     G_k(\vec{y}) = \int f_k(\mat{W}_{k,1} \vec{y} + \mat{W}_{k,2} \vec{z}) \, 
     \pi_{Y | Z}(\vec{z}|\vec{y}) d\vec{z}, 
\end{equation}
where $\pi_{Y | Z}$ is the conditional density, 
\[
\pi_{Y | Z}(\vec{z}|\vec{y}) := 
   \frac{\pi(\mat{W}_{k,1} \vec{y} + \mat{W}_{k,2} \vec{z})}{\pi_Z(\vec{z})}, 
\quad 
\text{with }
\pi_Z(\vec{z}) = \int \pi(\mat{W}_{k,1} \vec{y} + \mat{W}_{k,2} \vec{z})\, d\vec{y}.
\]
In practice, the \emph{marginalized} $G_k$, which for convenience we can denote by
\[
 G_k(\vec{y}) = \mathbb{E}_\vec{z} \{ f_k(\mat{W}_{k,1} \vec{y} + \mat{W}_{k,2} \vec{z})\},
\]
can be approximated by Monte Carlo sampling,
\begin{equation}\label{equ:Gk_hat}
G_k(\vec{y}) \approx \hat{G}_k(\vec{y}) := \frac{1}{N_\text{AS}} \sum_{i=1}^{N_\text{AS}} 
 f_k(\mat{W}_{k,1} \vec{y} + \mat{W}_{k,2} \vec{z}_i).
\end{equation}
However, as seen below, even a very small Monte Carlo Sample (even with
$N_\text{AS} = 1$) can be acceptable. This partly justifies and explains the
effectiveness of Algorithm~\ref{alg:AS}, which can be seen as a special case
of~\eqref{equ:Gk_hat}, with $N_\text{AS} = 1$ and $\vec{z}_1 = 0$.



Recall the approximation of the KL modes $F_k(\vec{\xi}) \approx
G_k(\vec{W}_{k,1}^T\vec{\xi})$.  As a first step in our error analysis, 
we quantify the error in this approximation.
For each $k \in \{1, \ldots, N\}$, we define
\begin{equation}\label{equ:deltak}
\delta_k := \left(\sum_{j = r_k+1}^\Np \lambda_j(\mat{S}_k)\right)^{1/2},
\end{equation}
where 
$\{ \lambda_j(\mat{S}_k)\}_{j=1}^\Np$ are the eigenvalues of $\mat{S}_k$ 
defined in~\eqref{equ:Sk} and $r_k$ is the dimension of the active subspace for the $k$th KL mode $F_k(\vec{\xi})$. 
The following result bounds the error in
approximating the individual KL modes. 
\begin{lemma} \label{lem:basic}
Let $\delta_k$ be as in~\eqref{equ:deltak}, and let $G_k$ and $\hat{G}_k$
be as in~\eqref{equ:Gk} and~\eqref{equ:Gk_hat}. Then,
\begin{enumerate}[label=(\alph*)]
\item $\displaystyle\int_\Omega (F_k(\vec{\xi}) - {G}_k(\vec{W}_{k,1}^T\vec{\xi}))^2 
\pi(\vec{\xi}) d\vec{\xi} \leq \delta_k^2$. 
\item $\displaystyle\int_\Omega (F_k(\vec{\xi}) - \hat{G}_k(\vec{W}_{k,1}^T\vec{\xi}))^2 
\pi(\vec{\xi}) d\vec{\xi} \leq (1+N_\text{AS}^{-1/2})\delta_k^2$. 
\end{enumerate}
\end{lemma}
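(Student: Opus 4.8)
The plan is to treat part (a) as the standard active subspace error bound specialized to the Gaussian setting, in which the Poincar\'e constant equals one, and then to bootstrap part (b) from part (a) by a bias--variance argument for the Monte Carlo average $\hat G_k$.

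For part (a): since $\vec\xi \sim \mathcal{N}(\vec 0,\mat I)$ and $\mat W_k = [\mat W_{k,1}\ \mat W_{k,2}]$ is orthogonal, the pair $(\vec y,\vec z) := (\mat W_{k,1}^T\vec\xi,\ \mat W_{k,2}^T\vec\xi)$ is jointly Gaussian with identity covariance, hence $\vec y$ and $\vec z$ are independent standard Gaussians and $G_k(\vec y)$ in~\eqref{equ:Gk} is the conditional expectation $\mathbb{E}[F_k(\vec\xi)\mid \mat W_{k,1}^T\vec\xi = \vec y] = \mathbb{E}_{\vec z}[F_k(\mat W_{k,1}\vec y + \mat W_{k,2}\vec z)]$. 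Writing the left side of~(a) as $\mathbb{E}_{\vec y}\mathbb{E}_{\vec z}\big[(F_k(\mat W_{k,1}\vec y + \mat W_{k,2}\vec z) - G_k(\vec y))^2\big]$, the inner expectation is the variance, over the standard Gaussian $\vec z$ and for fixed $\vec y$, of $\vec z \mapsto F_k(\mat W_{k,1}\vec y + \mat W_{k,2}\vec z)$. Lemma~\ref{lem:Lipschitz} makes this map Lipschitz, so the Gaussian Poincar\'e inequality applies with constant $1$, bounding the inner expectation by $\mathbb{E}_{\vec z}\big[\|\nabla_{\vec z}F_k(\mat W_{k,1}\vec y + \mat W_{k,2}\vec z)\|^2\big] = \mathbb{E}_{\vec z}\big[\|\mat W_{k,2}^T\nabla F_k(\mat W_{k,1}\vec y + \mat W_{k,2}\vec z)\|^2\big]$, using the chain rule with Jacobian $\partial\vec\xi/\partial\vec z = \mat W_{k,2}$. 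Integrating over $\vec y$ and reverting to $\vec\xi$ gives $\mathbb{E}_{\vec\xi}\big[\|\mat W_{k,2}^T\nabla F_k(\vec\xi)\|^2\big] = \operatorname{tr}\!\big(\mat W_{k,2}^T\mat S_k\mat W_{k,2}\big)$ by the definition~\eqref{equ:Sk} of $\mat S_k$ and $\|\vec a\|^2 = \operatorname{tr}(\vec a\vec a^T)$; since $\mat W_{k,2}^T\mat S_k\mat W_{k,2} = \mat\Lambda_{k,2}$ by the partitioned spectral decomposition of $\mat S_k$, this trace equals $\sum_{j=r_k+1}^{\Np}\lambda_j(\mat S_k) = \delta_k^2$, which is~(a).

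For part (b): I would read the bound as holding in expectation over the i.i.d.\ draws $\vec z_1,\dots,\vec z_{N_\text{AS}}\sim\pi_Z$ defining $\hat G_k$ in~\eqref{equ:Gk_hat}, these being independent of the inactive component $\vec z = \mat W_{k,2}^T\vec\xi$ of the evaluation point. Conditioning on $\vec y$, decompose $F_k(\vec\xi) - \hat G_k(\vec y) = \big(F_k(\mat W_{k,1}\vec y + \mat W_{k,2}\vec z) - G_k(\vec y)\big) - \big(\hat G_k(\vec y) - G_k(\vec y)\big)$; the first term is mean zero over $\vec z$, the second is mean zero over $\{\vec z_i\}$, and the two are independent, so the cross term vanishes. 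The first contributes $\sigma_k^2(\vec y) := \mathbb{E}_{\vec z}\big[(F_k(\mat W_{k,1}\vec y + \mat W_{k,2}\vec z) - G_k(\vec y))^2\big]$ and the second, as the error of a sample mean of i.i.d.\ mean-zero terms, contributes $N_\text{AS}^{-1}\sigma_k^2(\vec y)$, so the conditional second moment is $(1+N_\text{AS}^{-1})\sigma_k^2(\vec y)$; integrating over $\vec y$ and using part~(a), namely $\mathbb{E}_{\vec y}[\sigma_k^2(\vec y)]\le\delta_k^2$, gives $(1+N_\text{AS}^{-1})\delta_k^2 \le (1+N_\text{AS}^{-1/2})\delta_k^2$ for $N_\text{AS}\ge 1$. (A cruder triangle inequality, $\|F_k - \hat G_k\| \le \|F_k - G_k\| + \|G_k - \hat G_k\| \le \delta_k + \delta_k/\sqrt{N_\text{AS}}$, also works but yields the weaker constant $(1+N_\text{AS}^{-1/2})^2$.)

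The main obstacle here is not any single hard estimate but getting the probabilistic set-up exactly right: verifying that the active and inactive coordinates are independent standard Gaussians, so that $G_k$ is a bona fide conditional expectation and the Gaussian Poincar\'e inequality with unit constant is legitimately available; and, in part~(b), keeping the Monte Carlo nodes independent of the point at which the error is measured so that the bias--variance cross term drops out. The regularity needed to justify the Poincar\'e step is already furnished by Lemma~\ref{lem:Lipschitz}, and the remaining manipulations---the chain rule, the $\operatorname{tr}(\vec a\vec a^T)$ identity, and reading off $\operatorname{tr}(\mat\Lambda_{k,2})$ from the partition of $\mat S_k$---are routine.
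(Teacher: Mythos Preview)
Your argument is correct. The paper's own proof is much shorter: it simply observes that each $F_k$ is square integrable, mean zero, differentiable and Lipschitz (invoking Lemma~\ref{lem:Lipschitz}), and then cites Theorems~4.3 and~4.4 of Constantine's monograph directly. What you have done is essentially reprove those theorems in the Gaussian case, where the Poincar\'e constant is exactly one. Your route is self-contained and makes the mechanism transparent---in particular, the chain $\operatorname{Var}_{\vec z}\to\|\mat W_{k,2}^T\nabla F_k\|^2\to\operatorname{tr}(\mat\Lambda_{k,2})$ is exactly the content of Constantine's Theorem~4.3. For part~(b), your bias--variance decomposition (exploiting independence of $\vec z$ and the Monte Carlo nodes $\{\vec z_i\}$) in fact yields the sharper constant $1+N_{\text{AS}}^{-1}$, whereas Constantine's triangle-inequality argument gives $(1+N_{\text{AS}}^{-1/2})^2$; either implies the paper's stated $(1+N_{\text{AS}}^{-1/2})$ only after a trivial further relaxation, so your observation that the stated bound is loose is well taken. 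The paper's citation approach is quicker but hides both the role of the Gaussian Poincar\'e constant and the fact that a tighter constant is available.
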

\begin{proof}
The KL modes $F_k$ are square integrable, mean zero, and by Lemma~\ref{lem:Lipschitz}, they 
are differentiable and Lipschitz continuous. Thus, the first statement follows from 
using~\cite[Theorem 4.3]{Constantine15} 
and the second one follows from~\cite[Theorem 4.4]{Constantine15}. \qed
\end{proof}

%

Next, we consider the error, due to active subspace projection (for individual 
KL modes), in approximating the KL expansion:
\[
e(\vec{x},\vec{\xi}) = | f_N(\vec{x},\vec{\xi}) - \hat{f}_N(\vec{x}, \vec{\xi})|
\]
where $f_N$ is as in~\eqref{equ:KLE_alt}, and
\begin{equation}\label{equ:fhat}
\hat{f}_N(\vec{x},\vec{\xi}):=  
\displaystyle\sum_{k=1}^N {G}_k(\vec{W}_{k,1}^T\vec{\xi}) 
\phi_k(\vec{x})
\end{equation}
its active subspace-based approximation 
where ${G}_k(\vec{W}_{k,1}^T\vec{\xi})$ is our approximation of
the KL modes $F_k(\vec{\xi})$, as defined before. Then let
\[
\bar{e}(\vec{\xi}) = \int_\X e(\vec{x},\vec{\xi}) d\vec{x}.
\]

\begin{theorem}\label{thm:fN_error}
$\mathbb{E}\{\bar{e}\} \leq |\X|^\frac{1}{2} \displaystyle\sum_{k=1}^N \delta_k$,
where $\delta_k$ is as in~\eqref{equ:deltak}.
\end{theorem}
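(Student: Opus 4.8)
The plan is to reduce the claim to a term-by-term estimate on the modal errors $F_k - G_k(\vec{W}_{k,1}^T\cdot)$ and then invoke Lemma~\ref{lem:basic}(a). First I would observe that the mean $\bar f$ cancels (reading $\hat f_N$ as an approximation of the centered field, consistent with the convention $\bar f \equiv 0$ used for the analysis), so that the pointwise error collapses to
\[
e(\vec{x}, \vec{\xi}) = \Bigl| \sum_{k=1}^N \bigl(F_k(\vec{\xi}) - G_k(\vec{W}_{k,1}^T\vec{\xi})\bigr)\,\phi_k(\vec{x}) \Bigr|
\le \sum_{k=1}^N \bigl|F_k(\vec{\xi}) - G_k(\vec{W}_{k,1}^T\vec{\xi})\bigr|\,|\phi_k(\vec{x})|,
\]
by the triangle inequality.

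Next I would integrate in $\vec{x}$ over $\X$ and apply the Cauchy--Schwarz inequality to each spatial integral, using the normalization $\|\phi_k\|_{L^2(\X)} = 1$ from~\eqref{equ:eig_prob} to get $\int_\X |\phi_k(\vec{x})|\,d\vec{x} \le |\X|^{1/2}\|\phi_k\|_{L^2(\X)} = |\X|^{1/2}$. This yields
\[
\bar e(\vec{\xi}) \le |\X|^{1/2}\sum_{k=1}^N \bigl|F_k(\vec{\xi}) - G_k(\vec{W}_{k,1}^T\vec{\xi})\bigr|.
\]
Finally I would take the expectation over $\vec{\xi}$, interchange the finite sum with $\mathbb{E}$, and use Jensen's inequality (equivalently Cauchy--Schwarz in $L^2(\Omega,\pi)$) to pass from the $L^1$ to the $L^2$ norm of each modal error, so that $\mathbb{E}\{|F_k - G_k(\vec{W}_{k,1}^T\vec{\xi})|\} \le \bigl(\mathbb{E}\{(F_k - G_k(\vec{W}_{k,1}^T\vec{\xi}))^2\}\bigr)^{1/2} \le \delta_k$ by Lemma~\ref{lem:basic}(a). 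Chaining the three bounds gives $\mathbb{E}\{\bar e\} \le |\X|^{1/2}\sum_{k=1}^N \delta_k$.

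I do not expect a serious obstacle here; the argument is essentially a triangle inequality followed by two applications of Cauchy--Schwarz and a citation of Lemma~\ref{lem:basic}. The only points deserving a word of care are (i) the cancellation of $\bar f$ between $f_N$ and $\hat f_N$, and (ii) the interchange of $\int_\X$ and $\mathbb{E}$, which is legitimate by Fubini--Tonelli since the integrand is nonnegative and the relevant quantities are integrable under Assumption~1 and Lemma~\ref{lem:Lipschitz}.
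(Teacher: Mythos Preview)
Your proof is correct and follows essentially the same route as the paper's: triangle inequality on the modal sum, Cauchy--Schwarz on $\int_\X |\phi_k|\,d\vec{x}$ using $\|\phi_k\|_{L^2(\X)}=1$, then Cauchy--Schwarz/Jensen on the expectation combined with Lemma~\ref{lem:basic}(a). The only cosmetic difference is that you first bound $\bar e(\vec\xi)$ and then take $\mathbb{E}$, whereas the paper writes $\mathbb{E}\{\bar e\}$ directly and factors the sum; the underlying estimates are identical.
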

\begin{proof}
See Appendix~\ref{sec:proofs}. \qed
\end{proof}

\begin{remark} \label{rmk:Gk_hat_AS}
Note that in view of Lemma~\ref{lem:basic}(b), if we use $\hat{G}_k$ defined
in~\eqref{equ:Gk_hat}, instead of $G_k$ in~\eqref{equ:fhat}, we can repeat
the argument in proof of Theorem~\ref{thm:fN_error} to get the following estimate:
\[
\mathbb{E}\{\bar{e}\} \leq |\X|^\frac{1}{2} (1+N_\text{AS}^{-1/2})
\displaystyle\sum_{k=1}^N \delta_k.
\] 
\end{remark}

Finally, we consider the overall error of approximating $f(\vec{x}, \vec\xi)$,
due to KL truncation and active subspace projection:
\[
E(\vec{x},\vec{\xi}) = |f(\vec{x},\vec{\xi})-\hat{f}_N(\vec{x},\vec{\xi})|
\]
where $f$ is the original QoI and $\hat{f}_N$ is the active subspace-based approximation as before. 
We consider,
\[
\bar{E}(\vec{\xi}) := \int_\X E(\vec{x},\vec{\xi}) dx.
\]
We have the following result:
\begin{theorem}\label{thm:overall_error}
\begin{equation}\label{equ:estimate}
\mathbb{E}\{\bar{E}(\vec{\xi})\} \leq |\X|^\frac{1}{2}\Bigg[\Bigg(\sum_{k=N+1}^\infty \lambda_k(C_f)\Bigg)^{1/2}+\sum_{k=1}^N \left(\sum_{j = r_k+1}^\Np \lambda_j(\mat{S}_k)\right)^{1/2}\Bigg].
\end{equation}
\end{theorem}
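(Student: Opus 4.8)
The plan is to split the overall error by the triangle inequality into a Karhunen--Lo\`eve truncation part and an active subspace projection part, bound each separately, and add. Pointwise in $\vec{x} \in \X$ and $\vec\xi \in \Omega$ we have
\[
E(\vec{x},\vec\xi) = |f(\vec{x},\vec\xi) - \hat{f}_N(\vec{x},\vec\xi)| \le |f(\vec{x},\vec\xi) - f_N(\vec{x},\vec\xi)| + |f_N(\vec{x},\vec\xi) - \hat{f}_N(\vec{x},\vec\xi)|,
\]
with $f_N$ the truncated KL expansion~\eqref{equ:KLE_alt}. Integrating over $\X$ gives $\bar{E}(\vec\xi) \le \bar{t}(\vec\xi) + \bar{e}(\vec\xi)$, where $\bar{t}(\vec\xi) := \int_\X |f - f_N|\,d\vec{x}$ and $\bar{e}$ is exactly the quantity appearing in Theorem~\ref{thm:fN_error}. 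Taking expectations, it remains only to bound $\mathbb{E}\{\bar{t}\}$, since $\mathbb{E}\{\bar{e}\} \le |\X|^{1/2}\sum_{k=1}^N \delta_k$ is Theorem~\ref{thm:fN_error}.

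For the truncation term, I would first apply Cauchy--Schwarz on $\X$ pointwise in $\vec\xi$,
\[
\bar{t}(\vec\xi) \le |\X|^{1/2}\left(\int_\X |f(\vec{x},\vec\xi) - f_N(\vec{x},\vec\xi)|^2\,d\vec{x}\right)^{1/2},
\]
then take expectations and use Jensen's inequality (concavity of the square root) to move the expectation inside the root:
\[
\mathbb{E}\{\bar{t}\} \le |\X|^{1/2}\left(\mathbb{E}\int_\X |f - f_N|^2\,d\vec{x}\right)^{1/2}.
\]
The final ingredient is the standard mean-square KL truncation identity $\mathbb{E}\int_\X |f - f_N|^2\,d\vec{x} = \sum_{k=N+1}^\infty \lambda_k(C_f)$: writing $f - f_N = \sum_{k>N} F_k(\vec\xi)\phi_k(\vec{x})$ with the unnormalized modes $F_k$ of~\eqref{equ:KL_mode_unnormal}, using orthonormality $\int_\X \phi_k\phi_\ell\,d\vec{x} = \delta_{k\ell}$, the fact that the KL modes are uncorrelated with $\mathbb{E}[F_k^2] = \lambda_k(C_f)$, and interchanging expectation, integral, and sum (all nonnegative terms, so Tonelli applies), one gets $\mathbb{E}\int_\X |f - f_N|^2\,d\vec{x} = \sum_{k>N}\lambda_k(C_f)$. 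Combining,
\[
\mathbb{E}\{\bar{E}\} \le \mathbb{E}\{\bar{t}\} + \mathbb{E}\{\bar{e}\} \le |\X|^{1/2}\left(\sum_{k=N+1}^\infty \lambda_k(C_f)\right)^{1/2} + |\X|^{1/2}\sum_{k=1}^N \delta_k,
\]
which is~\eqref{equ:estimate} after substituting $\delta_k = (\sum_{j=r_k+1}^{\Np}\lambda_j(\mat{S}_k))^{1/2}$.

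The only genuinely delicate point is the mean-square convergence of the KL expansion and the Fubini/Tonelli interchanges underlying the identity $\mathbb{E}\int_\X |f - f_N|^2\,d\vec{x} = \sum_{k>N}\lambda_k(C_f)$; these are justified by $f \in L^2(\X\times\Omega)$ together with mean-square continuity (Assumption~1(a)), which yield the required Mercer-type spectral convergence for $C_f$. Everything else is a routine application of Cauchy--Schwarz, Jensen's inequality, and the previously established Theorem~\ref{thm:fN_error}, so I expect the write-up to be short.
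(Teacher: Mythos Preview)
Your proposal is correct and follows essentially the same route as the paper: triangle inequality to split into a KL truncation term and the active subspace term $\bar{e}$, Cauchy--Schwarz over $\X$ followed by Jensen's inequality on the truncation term, the orthonormality identity $\mathbb{E}\int_\X|f-f_N|^2\,d\vec{x}=\sum_{k>N}\lambda_k(C_f)$, and an appeal to Theorem~\ref{thm:fN_error} for $\mathbb{E}\{\bar{e}\}$. The only cosmetic difference is that the paper first evaluates $\int_\X|f-f_N|^2\,d\vec{x}=\sum_{k>N}F_k(\vec\xi)^2$ pointwise in $\vec\xi$ before taking expectations, whereas you take expectations first and then compute the double integral; the two orderings are equivalent.
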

\begin{proof}
See Appendix~\ref{sec:proofs}. \qed
\end{proof}

We note that the first term in~\eqref{equ:estimate} indicates error due to
truncation of the output KL expansion. Recall that $\sum_{k=1}^\infty \lambda_k(C_f) =
\int_\X c_f(\vec{x}, \vec{x}) \, d\vec{x} < \infty$, where the equality is due
to Mercer's Theorem, and the finiteness of the integral is due to continuity of
the covariance operator, which is a consequence of the mean square continuity
assumption on the process.  Therefore, the first term in~\eqref{equ:estimate}
can be made arbitrarily small by taking $N$ sufficiently large.  However,
choosing a large $N$ could entail accumulation of error due to active subspace
projection error in the second term in~\eqref{equ:estimate}; this error,
however, can be controlled by increasing $r_k$.  In practice, in many
applications, a small number of output KL modes (i.e., a small $N$), can be
used to obtain an accurate KL representation for the output. Moreover,
typically low-dimensional (in many cases one- or two-dimensional) active
subspaces can be afforded for approximating the dominant KL modes. We
demonstrate these issues numerically in Section~\ref{sec:numerics}.

\begin{remark}\label{rmk:Gk_hat}
In view of Remark~\ref{rmk:Gk_hat_AS},
if we use $\hat{G}_k$ defined
in~\eqref{equ:Gk_hat}, instead of $G_k$ in~\eqref{equ:fhat}, we can repeat
the argument in proof of Theorem~\ref{thm:overall_error} to get the following estimate:
\[
\mathbb{E}\{\bar{E}(\vec{\xi})\} \leq |\X|^\frac{1}{2}\Bigg[\Bigg(\sum_{k=N+1}^\infty \lambda_k(C_f)\Bigg)^{1/2}+ (1+N_\text{AS}^{-1/2})\sum_{k=1}^N \left(\sum_{j = r_k+1}^\Np \lambda_j(\mat{S}_k)\right)^{1/2}\Bigg].
\]
\end{remark}

\textbf{Additional sources of error}.
The above error analysis only concerns errors assiciated with active subspace
projection and output KL truncation. In practical computations there are a number
of other errors. Most of these errors are related to the active subspace approach used
for approximating the KL modes $F_k$. These include errors in approximating
the eigenvalues and eigenvectors of $\mat{S}_k$'s, incurred due to sample 
average approximation to these matrices and surrogate modeling errors
incurred in approximating $G_k$'s. Errors in approximating $\mat{\Lambda}_k$
and $\mat{W}_k$ are analyzed for instance in~\cite{Constantine15}. Errors
due to surrogate modeling of $G_k$'s are difficult to quantify in general, 
as these errors depend on the choice of surrogate modeling framework. 

There are also further errors related to KL approximation of the 
output field. Namely, we need to approximate the mean field $\mathbb{E}\{f(\vec{x}, \cdot)\}$. 
If simple Monte Carlo sampling is used, approximations of the mean field
exhibit the usual Monte Carlo convergence behavior. However, quasi-Monte Carlo 
approaches can provide efficient means of obtaining more accurate estimates. 
Finally, there will be errors in computing the eigenvalues $\lambda_k(C_f)$ and
the corresponding eigenvectors. These errors are due to (i) sample average
approximation to the output covariance function, and (ii) errors due 
to discretizing the generalized eigenvalue problem~\eqref{equ:eig_prob}.
The discretization errors of course depend on the numerical method
used for solving the eigenvalue problem. For example, if Nystrom's method
is used, as done in the present work, the discretization errors can be 
controlled by the resolution of the computational grid, and the quadrature
method used.

We demonstrate numerically that once a suitable output dimension reduction is
determined, the active subspace approach can be deployed to obtain
approximations to the output KL modes and an overall surrogate model that
captures the statistical properties of $f$ reliably.  We find that this can be
accomplished with an ensemble of function evaluations $\{ f(\cdot,
\vec{\xi}_j)\}_{j = 1}^{N_s}$, with a modest $N_s$.

\section{Application to biotransport in tumors}
\label{sec:numerics}
In this section, we present our computational results in the context of a
biotransport application problem. We begin by describing the governing model in
Section~\ref{sec:model}. Next, we discuss computation of the KL expansion of
the output in Section~\ref{sec:KLE_Bio}. This is followed by our results on
active subspace discovery and surrogate model construction in
Section~\ref{sec:AS_Bio}.  We test the accuracy of the computed surrogate
models in Section~\ref{sec:AS_Bio_Test}. 

\subsection{The governing model}
\label{sec:model}
In this section we describe the biotransport problem we seek to investigate
using the proposed method. We are interested in understanding the impact of
uncertainty in the material properties of cancerous tumors. Specifically we seek
to characterize the uncertainties in the pressure field when a single needle
injection occurs at the center of a spherical tumor with uncertain
heterogeneous structure.
We focus on a 2D cross-section, and consider Darcy's
law constrained by mass conservation in a domain $\D \subset
\R^2$ given by a circle of radius $R_\text{tumor} = 5$ mm, centered
at the origin, with an inner circle
of radius $R_\text{needle} = 0.25$ mm, modeling the injection site, removed.
We denote the inner and outer boundaries of the domain
by $\Gamma_\text{N}$ and $\Gamma_\text{D}$, respectively. 
The following elliptic PDE governs the
fluid pressure $p$:
\begin{equation} \label{equ:2D_Darcy}
\begin{aligned}
-\nabla \cdot \left( \frac{\kappa}{\mu} \nabla p\right) &= 0 \quad \text{in } \D,\\ 
p &= 0 \quad \text{on } \Gamma_\text{D}, \\
\nabla p \cdot \vec{n} &= \frac{Q \mu}{2\pi R_\text{needle} \kappa} 
\quad \text{on } \Gamma_\text{N}.
\end{aligned}
\end{equation}
In this equation, $\kappa$ denotes the absolute permeability field, 
$\mu$ is the fluid dynamic
viscosity, $Q$ is the volume flow rate per unit length, and $\vec{n}$ is
the outward-pointing normal vector.
The nominal values for the above parameters are 
$\kappa = 0.5 \ md$, $\mu = 8.9\times 10^{-4} \ Pa \cdot s$, and $Q = 1 \ mm^2/min$. 
These values are chosen
according to previous investigations of fluid
transport in tumors~\cite{maher:08,ma:12TF,Chen:07}. 
As noted in a number of previous works, 
tumors exhibit complex structures due to their invasive
nature. Generally, tumors consist of loosely organized abnormal cells,
fibers, vasculature, and lymphatics~\cite{Clark:91}, resulting in disordered
tissues with complex heterogeneous structures.   

In the present work, we model the permeability field $\kappa$ by a log-Gaussian
random field as follows.
Let $z(\vec{x}, \omega)$ be a centered Gaussian process; here $\omega \in \Omega$
where $\Omega$ is an appropriate sample space. We assume $z$ has unit pointwise
variance and has correlation function
\begin{equation}
c_z(\vec{x},\vec{y}) = \exp\left\{- \frac{1}{\ell} \| \vec{x}-\vec{y}\|_1\right\}, \quad \vec{x}, \vec{y} \in \D,
\end{equation}
where $\ell > 0$ is the correlation length. In the present study we set the 
correlation length $\ell =1 \ mm$. 
Then, we define the log-permeability field $a = \log\kappa$ according to
\[
a(\vec{x}, \omega) = a_0({\vec{x}}) + \sigma_a z({\vec{x}}, \omega),
\quad {\vec{x}} \in \D, \omega \in \Omega,
\]
where $a_0$ and $\sigma^2_a$ represent the pointwise mean and
variance, respectively. 
We can represent $a(\vec{x},\omega)$ using a truncated KL expansion:
\begin{equation}
a(\vec{x}, \omega) \approx 
\hat{a}(\vec{x}, \omega) := 
a_0(\vec{x}) + \sum_{j=1}^{\Np} \sqrt{\lambda_j(C_a)} \xi_j(\omega) e_j(\vec{x}),
\end{equation}
where $a_0$ is the mean field, $(\lambda_j(C_a), e_j)$ are
the eigenpairs of the covariance operator $C_a$ of $a(\vec{x}, \omega)$, $\Np$ is the input parameter dimension, 
and $\xi_j$ are independent standard normal random variables.                                   
With this parameterization,  
the uncertainty in the (approximate) log permeability field $\hat{a} = \log\kappa$ is completely characterized by 
the random vector $\vec{\xi}=(\xi_1,\xi_2,\ldots, \xi_{\Np})^T$. That is, 
$\hat{a}(\vec{x},\omega)=\hat{a}(\vec{x},\vec{\xi}(\omega))$, and thus, we can consider the (approximate)
log-permeability field as a random process $\hat{a}:\D \times \Omega \to \R$.
The QoI under study here is the pressure field $p(\vec{x}, \vec{\xi})$.

For illustration, two sets of realizations of the permeability field and the
corresponding pressure field with correlation length $\ell=1 \ mm$ are
presented in Figure \ref{fig:log_perm}. We observe large fluctuations in the
permeability field and relatively mild fluctuations in the pressure field. 
\begin{figure}[ht!]\centering
	\includegraphics[width=50mm]{./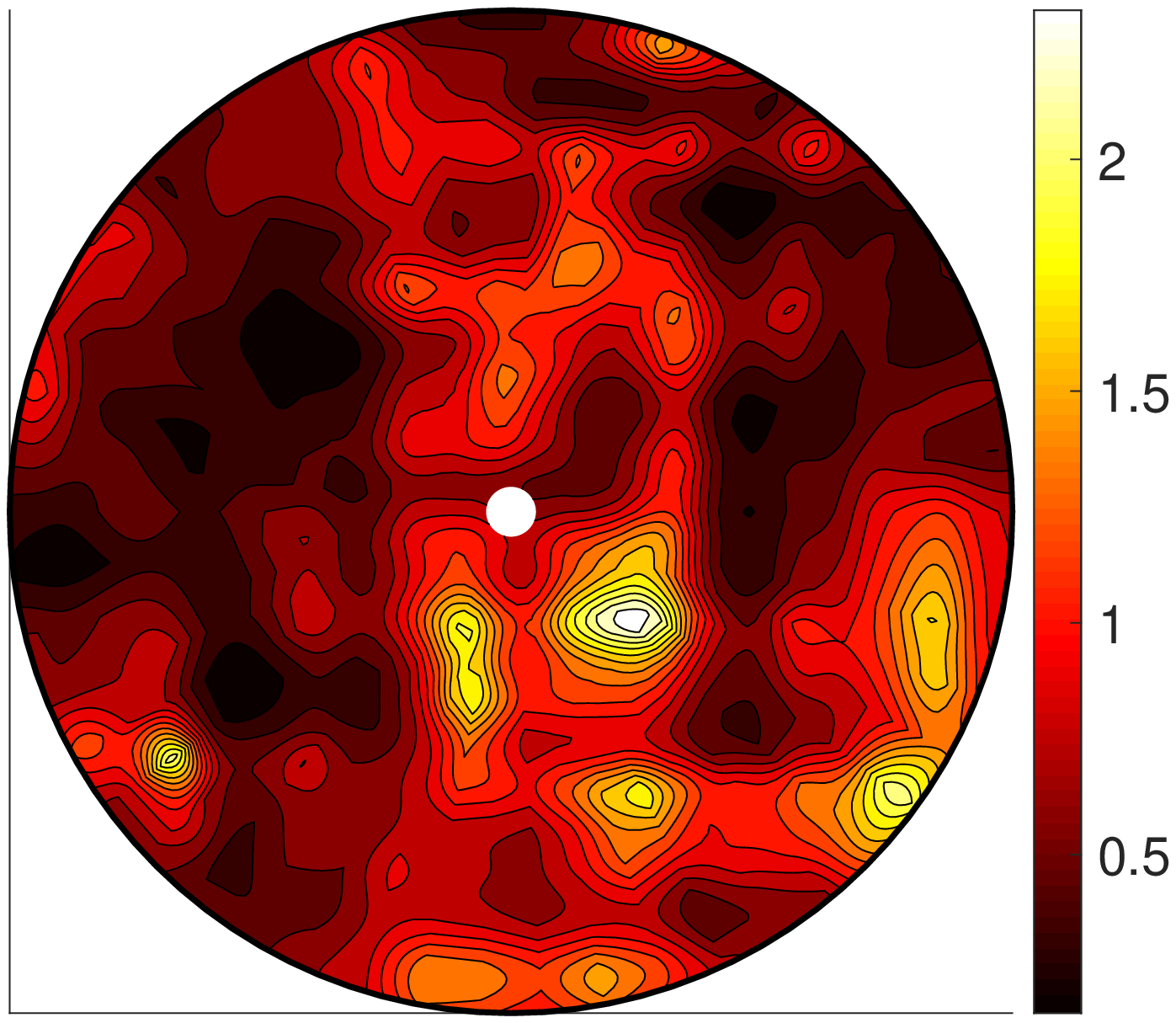}   
	\includegraphics[width=50mm]{./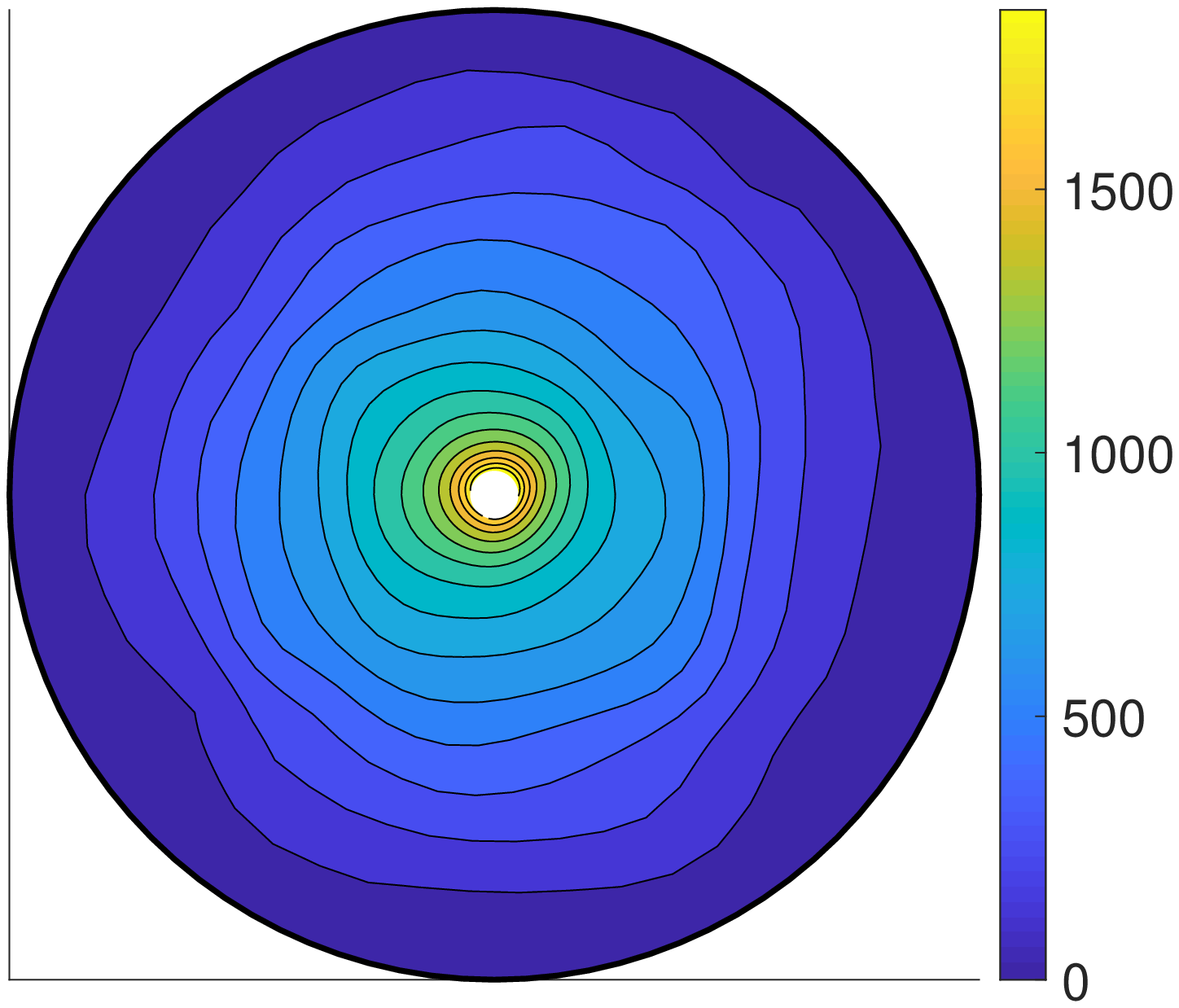}\\ 
	\includegraphics[width=50mm]{./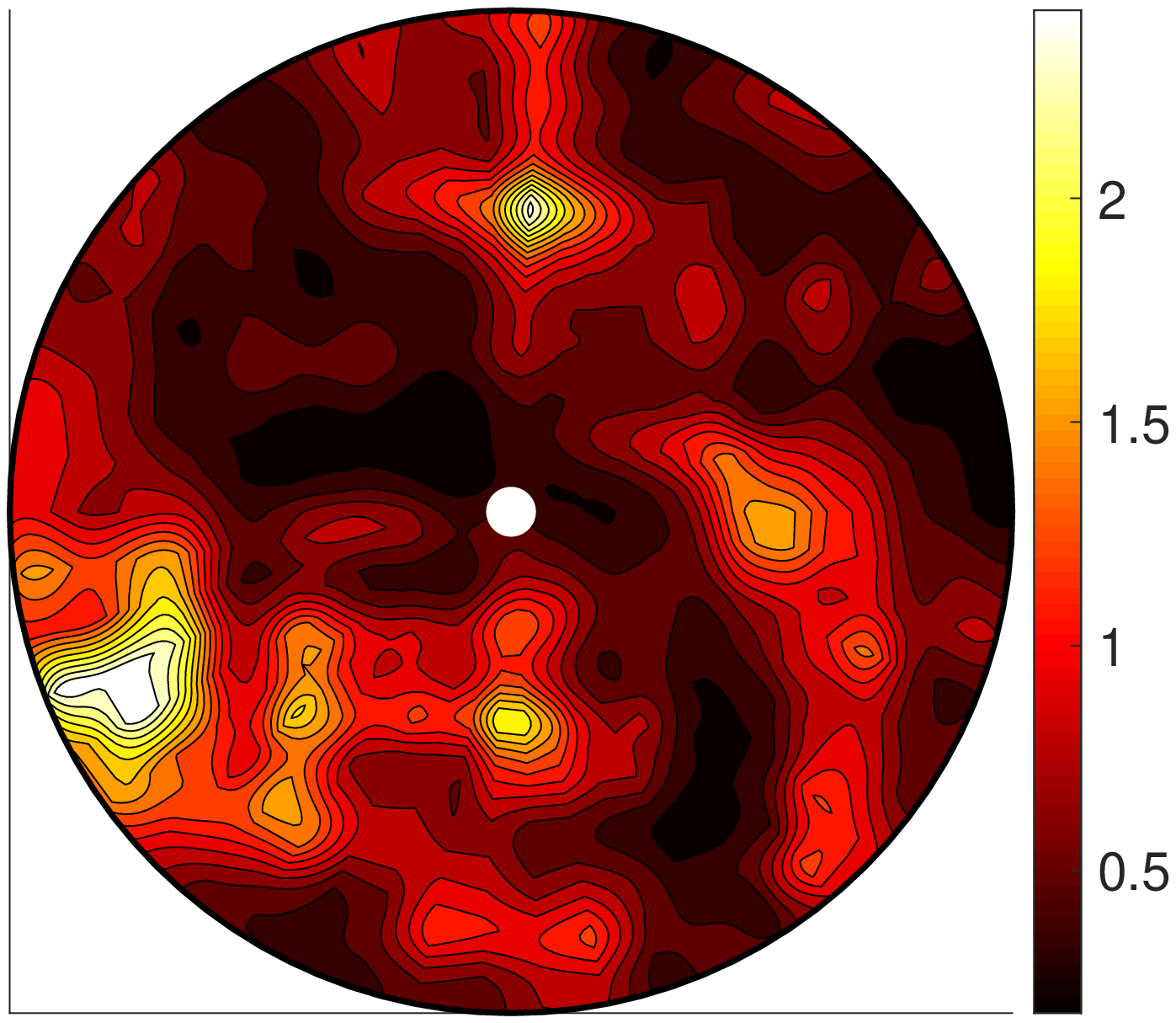} 
	\includegraphics[width=50mm]{./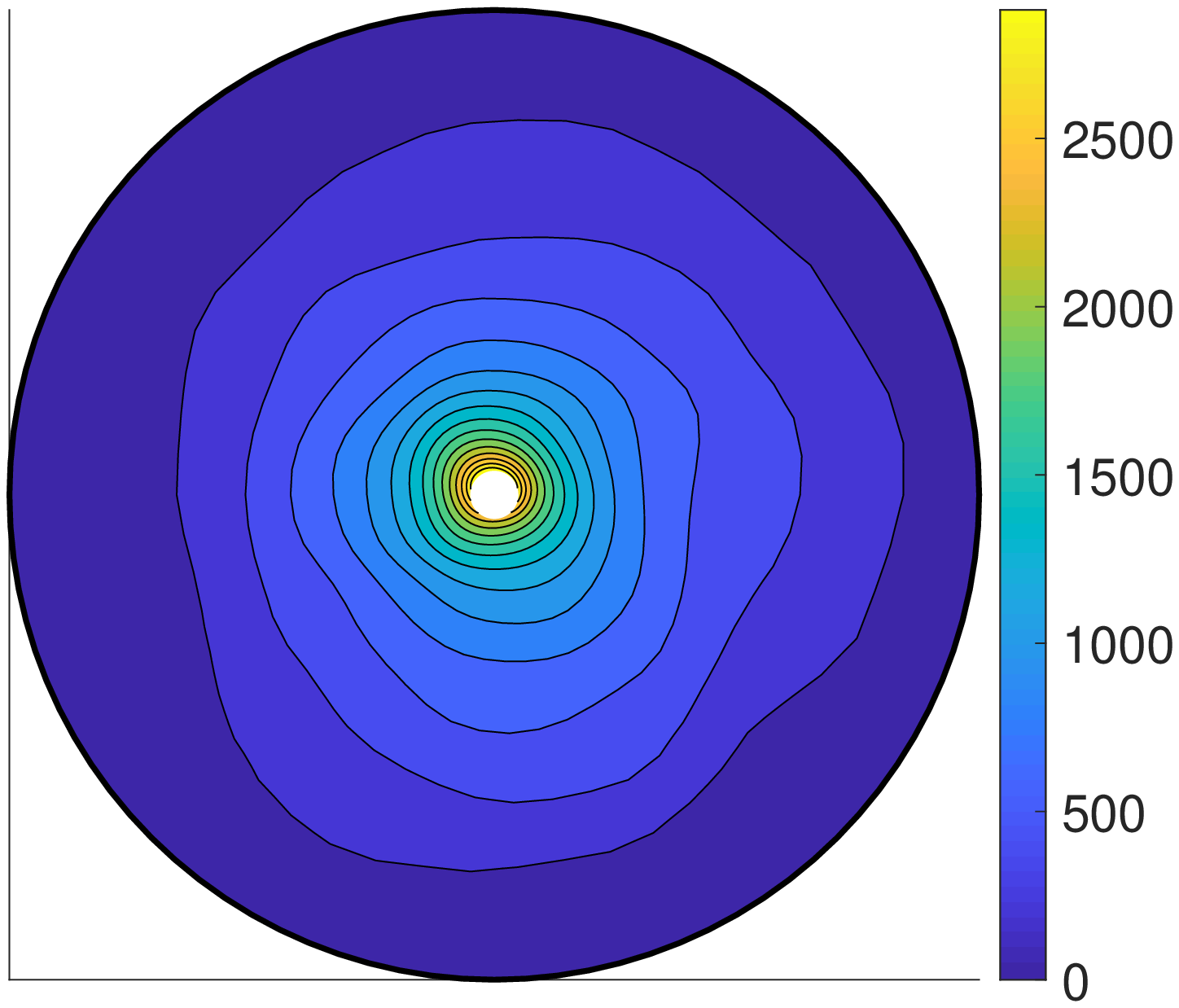}
	\caption{Two realizations of the log-permeability field (left) and the corresponding pressure field (right) using correlation length $\ell=1 \ mm$.}
	\label{fig:log_perm}
\end{figure}

\subsection{Spectral representation of the model output}\label{sec:KLE_Bio}
We represent the QoI, $p$, using a truncated KL expansion
\begin{equation}\label{equ:outputKLE}
p(\vec{x}, \vec{\xi}) \approx   \bar{p}(\vec{x}) +
\sum\limits_{k = 1}^N \sqrt{\lambda_k} p_k(\vec{\xi}) \phi_k(\vec{x}), \qquad 
\text{where}
\qquad
   p_k(\vec{\xi}) = \frac{1}{\sqrt{\lambda_k} }\int_\D 
(p(\vec{x}, \vec{\xi})-\bar{p}(\vec{x}))\phi_k(\vec{x}) d\vec{x}.
\end{equation}
Here $\lambda_k, \phi_k(\vec{x})$ are the eigenvalues and corresponding
eigenfunctions of the covariance operator $C_f$ of $p(\vec{x}, \vec\xi)$.

In Figure \ref{fig:setup} (left) we see that the eigenvalues of the
covariance operator $C_f$ show faster decay than those of the log-permeability
field $a(x,\vec{\xi})$. In Figure \ref{fig:setup} (middle) we compute the
ratio 
$\rho_k = (\sum_{k=1}^N \lambda_k)/\mathrm{Tr}(C_f)$, 
where $\lambda_k$ are the eigenvalues of the covariance operator
$C_f$ and $N$ is the number of KL modes retained. Using just $N = 10$ KL modes
gives us $\rho_k \approx 0.9$ indicating that $90$\% of the variance is captured. If
we add five more modes so that $N=15$ then we capture nearly 95\% of the
average variance in the model output $p$. Figure \ref{fig:setup} (right)
shows the first $40$ eigenvalues of the covariance operator $C_f$, for a number of different sample sizes $N_s$ used to approximate the covariance function of $p$. Note that 
using $N_s=300$ samples we can 
approximate the dominant eigenvalues reasonably well. In
the current study we set $\Np$ the input parameter dimension to be $200$ and to
retain $N=15$ KL modes. 
From the results that follow we will see that
we can achieve significant dimension reduction for the input parameter.

\begin{figure}[h!]
\centering
\includegraphics[width=.3\textwidth]{./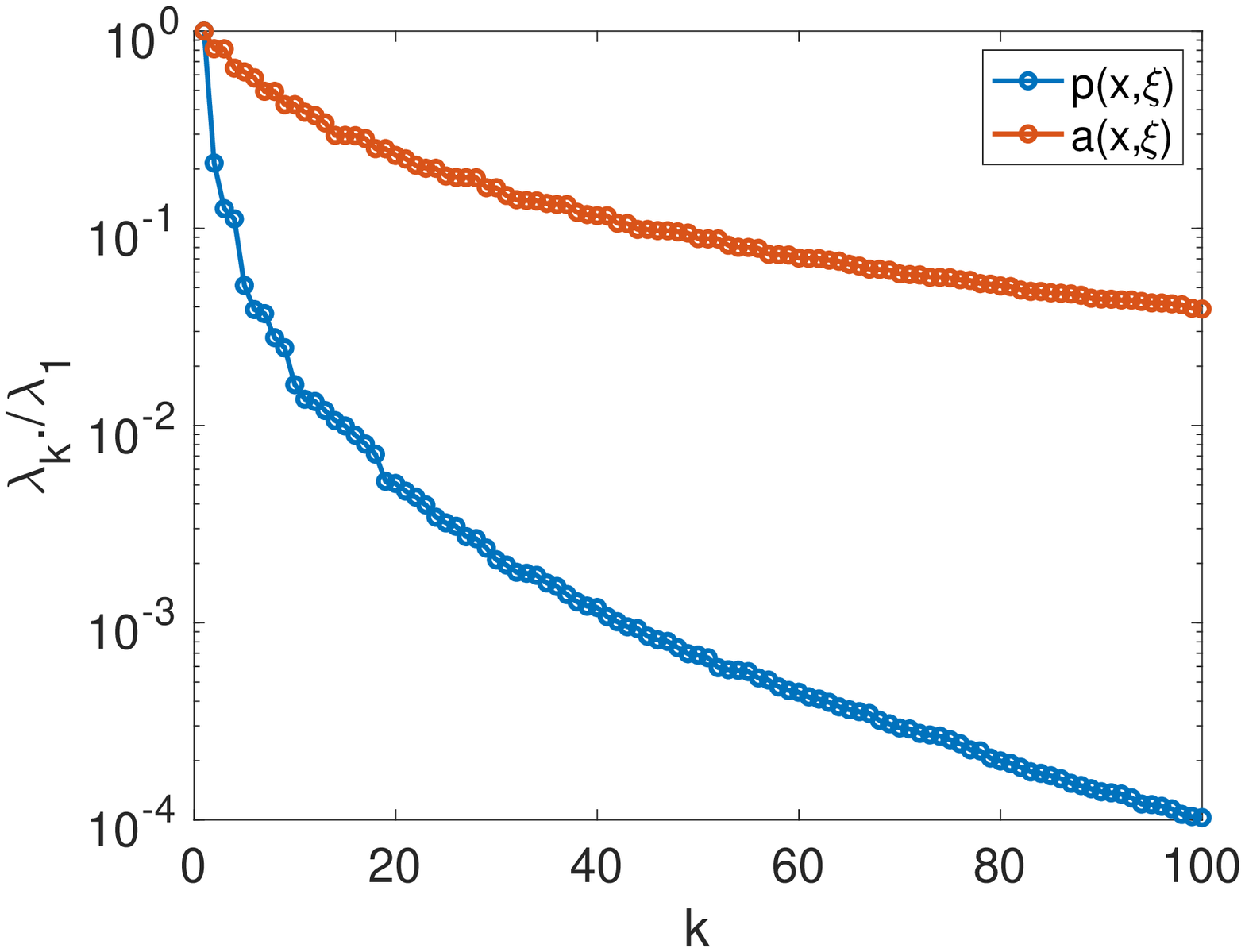}    
\includegraphics[width=.275\textwidth]{./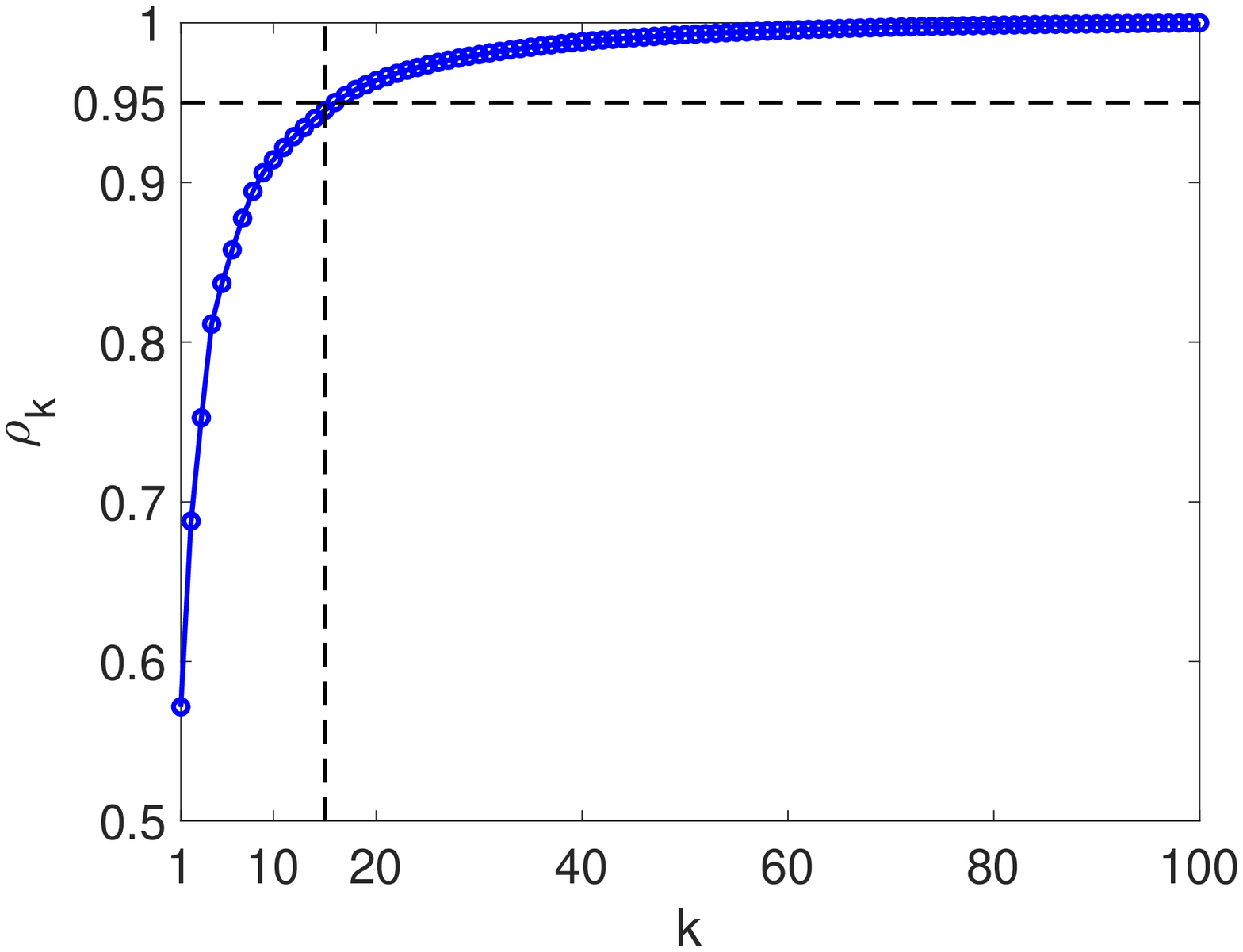} 
\includegraphics[width=.3\textwidth]{./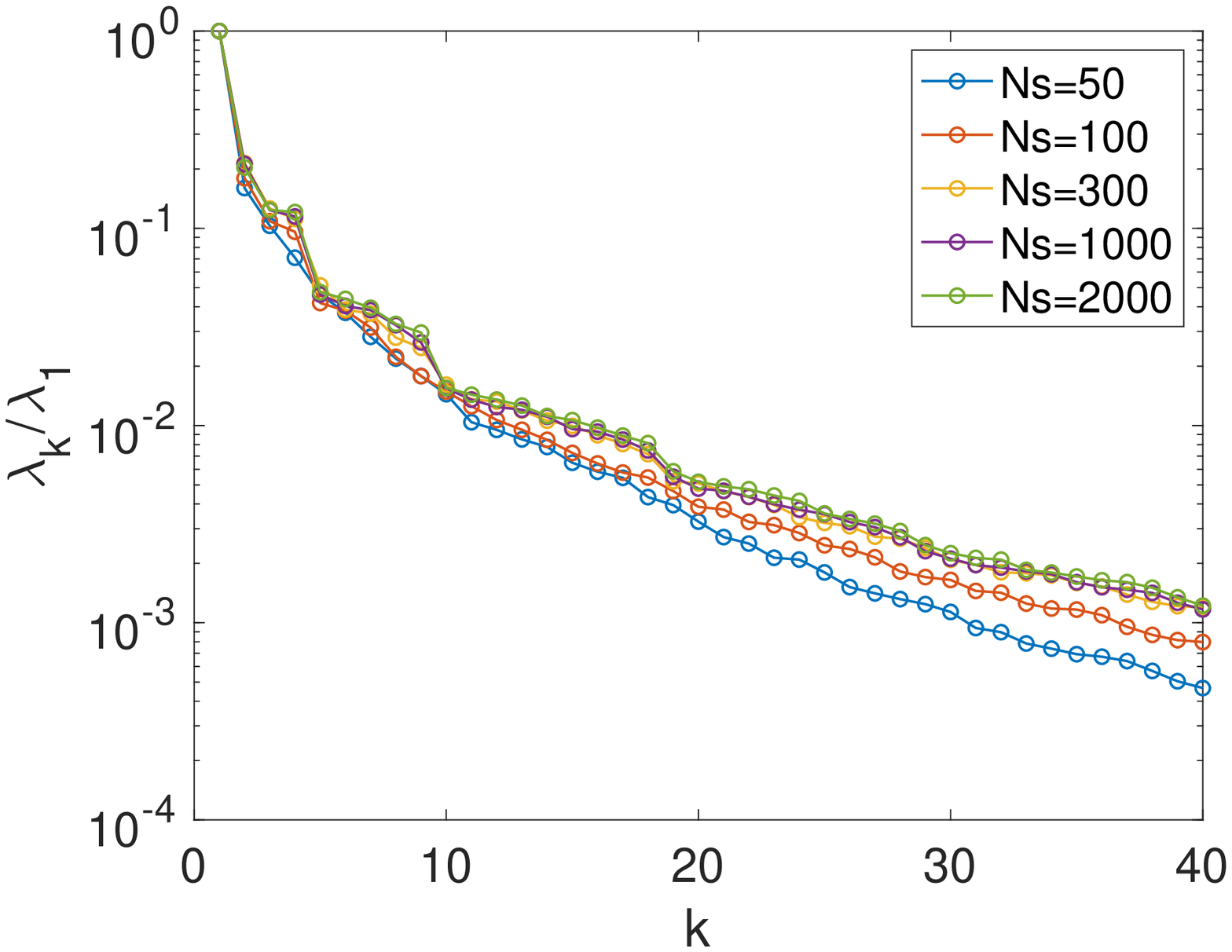}
\caption{Eigenvalue spectrum of $a(\vec{x},\vec{\xi})$ versus 
$p(\vec{x},\vec{\xi})$ (left).
Ratio showing saturation of average variance for $p(\vec{x},\vec{\xi})$ (middle). First $40$ eigenvalues of $C_f$ for different sample sizes $N_s$ (right).}
\label{fig:setup}
\end{figure}

\newcommand{\fac}{\frac{1}{\sqrt{\lambda_k(C_f)}}}
\newcommand{\ut}[1]{\tilde{{#1}}}

\subsection{Active subspace discovery and surrogate model construction}\label{sec:AS_Bio}
To construct the active subspace-based surrogate model for $p(\vec{x}, \vec\xi)$, 
we need the gradients of the output KL modes
defined in~\eqref{equ:outputKLE}. For this, we use the adjoint method.
The adjoint based
expression for $p_k(\vec{\xi})$ can be derived  
using a formal Lagrange approach; see, e.g.,~\cite{Gunzburger03}.
For a basic derivation of the gradient of the output KL modes for 
models governed by elliptic PDEs, we refer the reader 
to~\cite{CleavesAlexanderianGuyEtAl19}, where output KL expansions
are used within the context of derivative based global sensitivity analysis.
The adjoint based expression for the partial derivatives of
$p_k$'s are given by,
\begin{equation}\label{equ:grad}
   \frac{\partial p_k}{\partial \xi_j} = 
   \sqrt{\lambda_j(C_a)} \int_\D 
   e_j(\vec{x}) e^{\hat{a}(\vec{x}, \vec{\xi})}  \nabla p(\vec{x}) 
   \cdot \nabla q(\vec{x}) \, d\vec{x}.
\end{equation}
where $p$ is the solution of the (forward) PDE~\eqref{equ:2D_Darcy} and
$q$ is the solution of the adjoint equation:
\begin{equation}\label{equ:adj}
    \begin{aligned}
      -\nabla \cdot \big(\frac\kappa\mu  \nabla q\big) &= -\fac \phi_k \quad \text{ in }\D, \\
                    q  &= 0 \quad \text{ on } \GD, \\
                    \nabla q \cdot n &= 0 \quad \text{ on } \GN.
    \end{aligned}
\end{equation}
Note that in the forward and adjoint equation, we 
let the permeability field be $\kappa(\vec{x}, \vec\xi) = e^{\hat{a}(\vec{x},\vec\xi)}$.

Guided by the results in the previous subsection, 
we focus on the first $N = 15$ output KL modes.
For each $k \in \{1, \ldots, N\}$, we generate a sample
$\{ \nabla p_k(\vec\xi_i)\}_{i=1}^{N_s}$, with $N_s = 300$. 
For $\{ \vec\xi_i\}_{i=1}^{N_s}$, we used the same set of $300$ 
parameter samples used in computing the KL expansion of the output.
Using~\eqref{equ:grad},
\[
\nabla p_k(\vec\xi_i) =
\begin{bmatrix}
\displaystyle
\sqrt{\lambda_1(C_a)} \int_\D
   e_1(\vec{x}) e^{\hat{a}(\vec{x}, \vec{\xi}_i)}  \nabla p(\vec{x}, \vec\xi_i)
   \cdot \nabla q(\vec{x}, \vec\xi_i) \, d\vec{x}
\\
\displaystyle
\sqrt{\lambda_2(C_a)} \int_\D
   e_2(\vec{x}) e^{\hat{a}(\vec{x}, \vec{\xi}_i)}  \nabla p(\vec{x}, \vec\xi_i)
   \cdot \nabla q(\vec{x}, \vec\xi_i) \, d\vec{x}
\\
\vdots
\\
\displaystyle
\sqrt{\lambda_{N_p}(C_a)} \int_\D
e_{N_p}(\vec{x}) e^{\hat{a}(\vec{x}, \vec{\xi}_i)} \nabla p(\vec{x}, \vec\xi_i)
\cdot \nabla q(\vec{x}, \vec\xi_i) \, d\vec{x}
\end{bmatrix}, \quad i = 1, \ldots, N_s.
\]
To compute these, we reuse the model evaluations $\{p(\vec{x},
\vec\xi_i)\}_{i=1}^{N_s}$, from the computation of the output KL expansion earlier; the
adjoint variables $q(\cdot, \vec\xi_i)$ are computed by solving the adjoint
equation~\eqref{equ:adj}, with $\kappa = e^{\hat{a}(\vec{x},\vec\xi_i)}$, $i =
1, \ldots, N_s$. 

Using the gradient samples, we 
approximate the matrix $\mat{S}_k$ defined in \eqref{equ:Sk} for each KL
mode, $k=1,\ldots,N$:
\[
\mat{\hat{S}}_k = \frac{1}{N_s} \sum_{i=1}^{N_s} \nabla p_k(\vec\xi_i) 
\nabla p_k(\vec\xi_i)^T.
\]
In each case, we consider 
the corresponding spectral decomposition $\mat{\hat{S}}_k =
\mat{\hat{W}}_k\mat{\hat{\Lambda}}_k\mat{\hat{W}}_k^T$. 
To identify the active
subspace and where to partition the eigenpairs we examine the 
spectrum of $\mat{\hat{S}}_k$. 
As an illustration, in Figure \ref{fig:eigs}~(top) we present the
spectrum for the first three output KL modes. Visually we observe a gap between the
first and second eigenvalue for each of the modes indicating 
one-dimensional active subspaces. In Figure~\ref{fig:eigs}~(bottom), we show the corresponding
sufficient summary plots (SSPs) for the corresponding modes. 
A sufficient summary plot here is a scatter plot of the active variables $y=\mat{W}_{k,1}^T\vec\xi$ 
versus the output KL modes $p_k(\vec\xi)$. We observe a strong univariate trend which
further indicates the presence of one-dimensional active subspaces. To provide a
consistent truncation approach, we can use a threshold $\gamma$ and choose the
dimension $r_k$ of the active subspace according to  
$\lambda_{k,1}/\lambda_{k,r_{k+1}} > \gamma$.  
In the current study we use $\gamma = 10$.  Using this approach, we identified a
one-dimensional active subspace for each of the first $15$ output modes, expect
modes $11$ and $12$ where two-dimensional active subspaces were identified.  To
illustrate, we report the spectrum of $\mat{S}_{12}$ and the SSP for $p_{12}$
in Figure~\ref{fig:mode12}.  Based on the determined values of $r_k$, we
partition  
\[
\mat{\Lambda}_k=
\begin{bmatrix} \mat{\Lambda}_{k,1} & \\ & \mat{\Lambda}_{k,2} \end{bmatrix},
\quad
\mat{W}_k=\begin{bmatrix} \mat{W}_{k,1} & \mat{W}_{k,2} \end{bmatrix}, 
\quad k = 1, \ldots, N.
\]
$\mat{\Lambda}_{k,1}$ contains the dominant 
eigenvalues and $\mat{W}_{k,1}$ the corresponding eigenvectors.

\begin{figure}[h!]\centering
		\begin{tabular}{ccc}
			\includegraphics[width=.3\textwidth]{./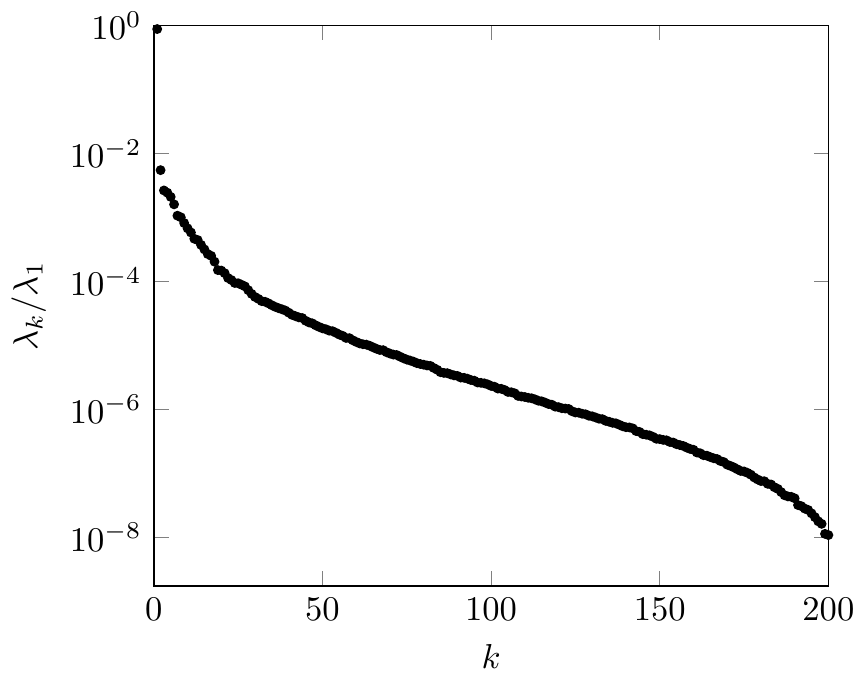} &   
			\includegraphics[width=.3\textwidth]{./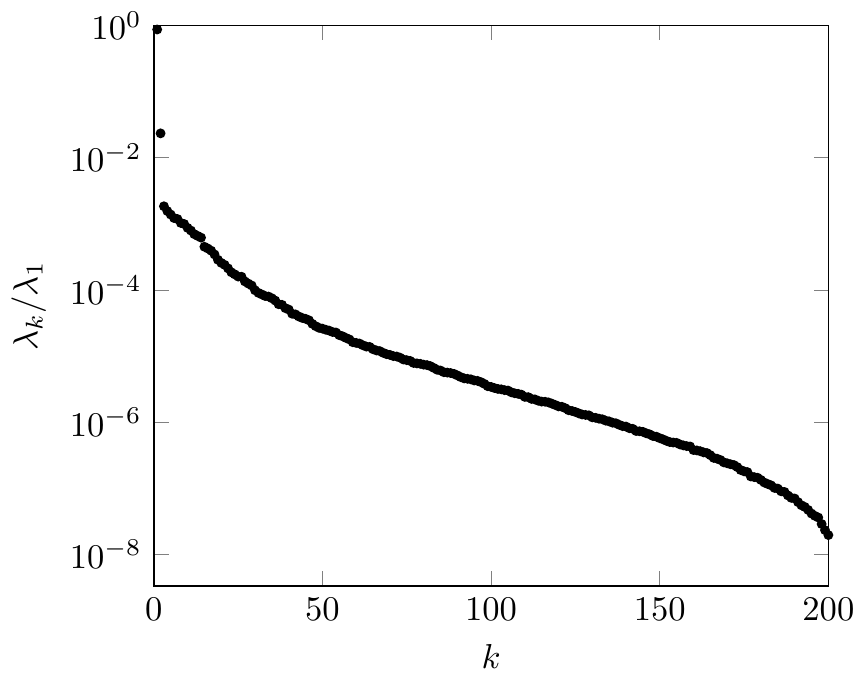} &
			\includegraphics[width=.3\textwidth]{./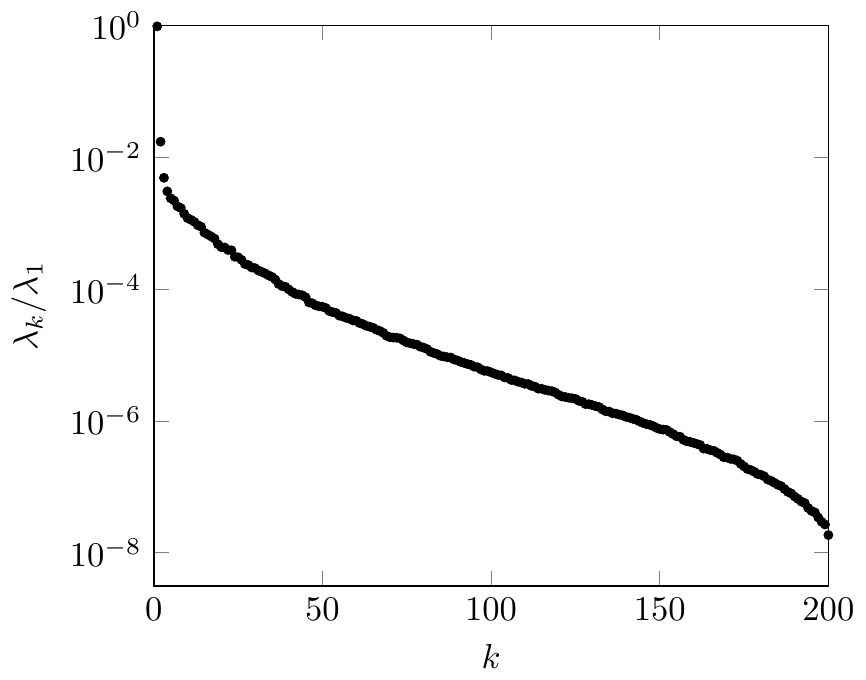} \\
			\includegraphics[width=.3\textwidth]{./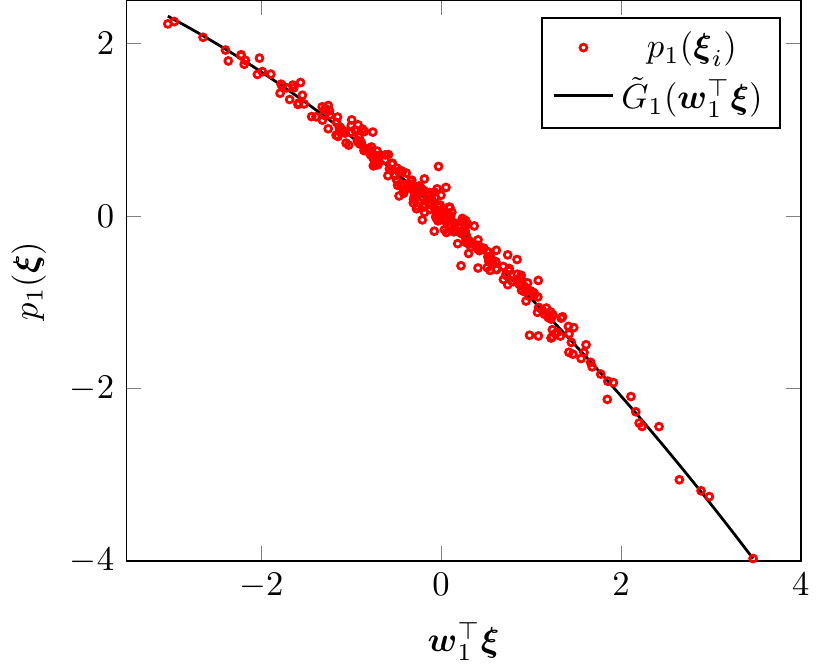} &   
			\includegraphics[width=.3\textwidth]{./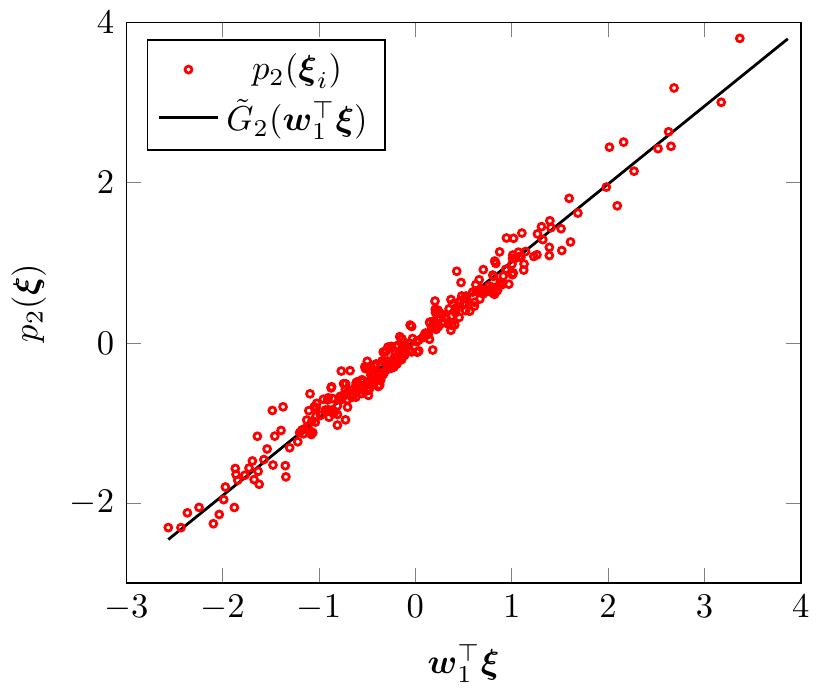} &
			\includegraphics[width=.3\textwidth]{./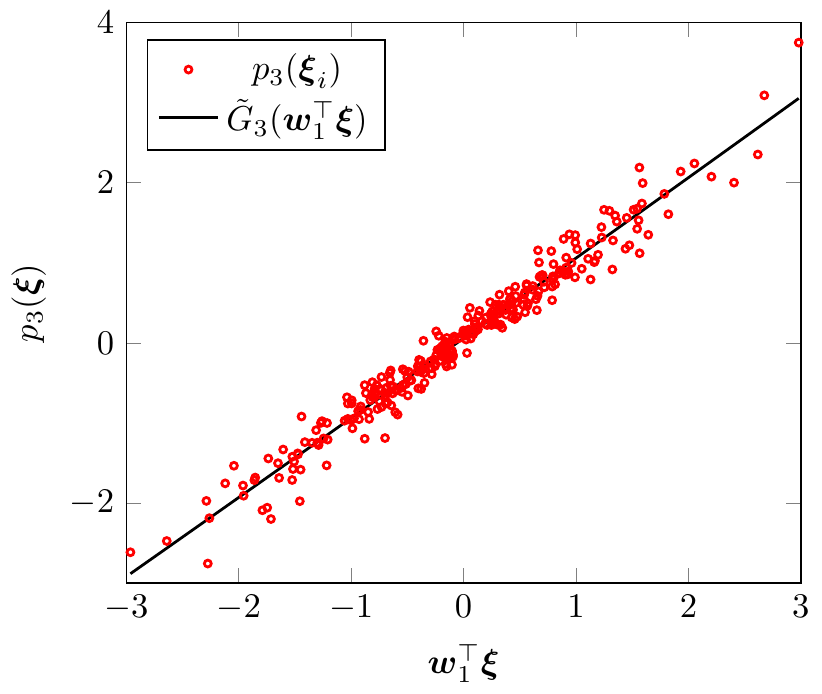} 
		\end{tabular}
	\caption{
Top: eigenvalues of the matrix $\mat{\hat{S}}_k$ for $k=1$ (left) 
$k=2$ (middle), and $k=3$ (right);  
bottom: SSPs for output KL modes $p_1$ (left), 
$p_2$ (middle), and $p_3$ (right).}
\label{fig:eigs}
\end{figure}
\begin{figure}[h!]\centering
\includegraphics[width=.35\textwidth]{./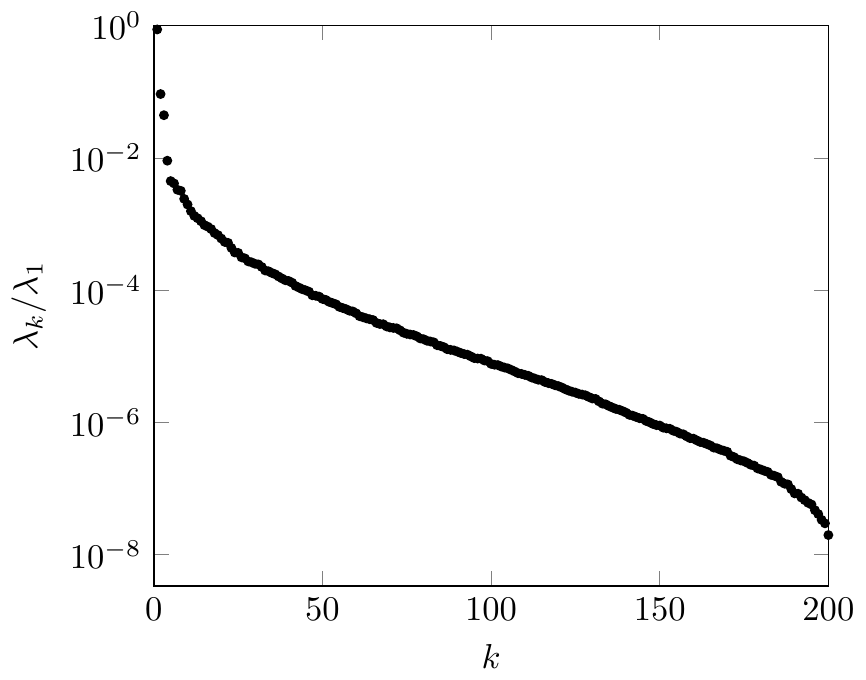}
\includegraphics[width=.45\textwidth]{./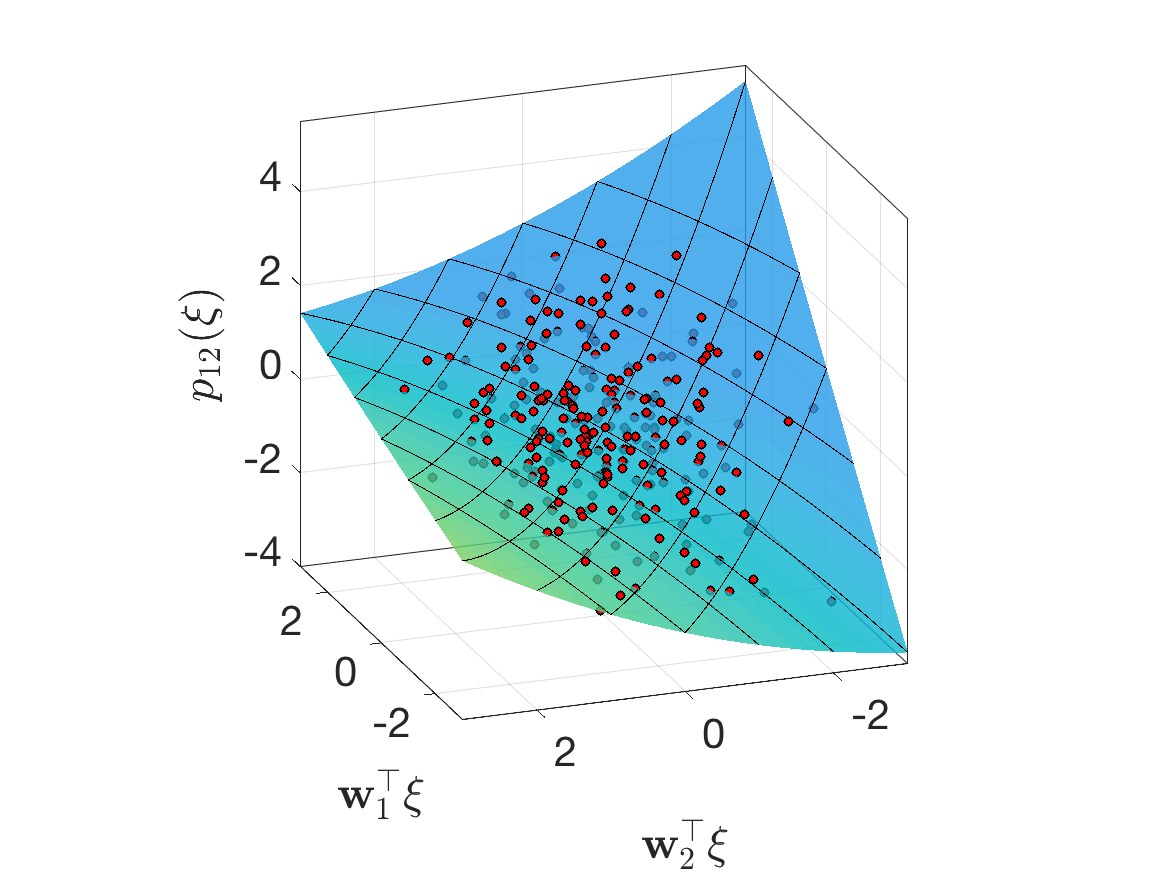}
\caption{The normalized eigenvalues of output mode $p_{12}$ (left)
and the 2D sufficient summary plot corresponding to $p_{12}$ (right). Note that the first two eigenvalues are very close to each other.}
\label{fig:mode12}
\end{figure}


Recall that the active subspace approach essentially seeks ``important
linear combinations'' of the input parameters. To illustrate this,
in 
Figure~\ref{fig:evecs}, we present the components of the
dominant eigenvector for the first three output KL modes $p_1,p_2$ and $p_3$.
The smaller inset plot shows the first 50 components of the dominant
eigenvector for $p_1,p_2$ and $p_3$.  The magnitude of the components give us a
sensitivity measure for each of the input parameters. A large component value
indicates that that particular input is important in defining the direction of
most variation in our function $p$. We note that  
the first output KL mode is sensitive to a few components of the input parameter vector.
In contrast, the second and especially the third
output KL modes show sensitivity to a larger number of components of $\vec\xi$.

Finally, we note that the present results indicate a significant dimension
reduction. The dominant output KL modes, each a function of 200 parameters, 
can be approximated in one or two dimensional active subspaces.

\begin{figure}[h!]\centering
                        \includegraphics[width=.32\textwidth]{./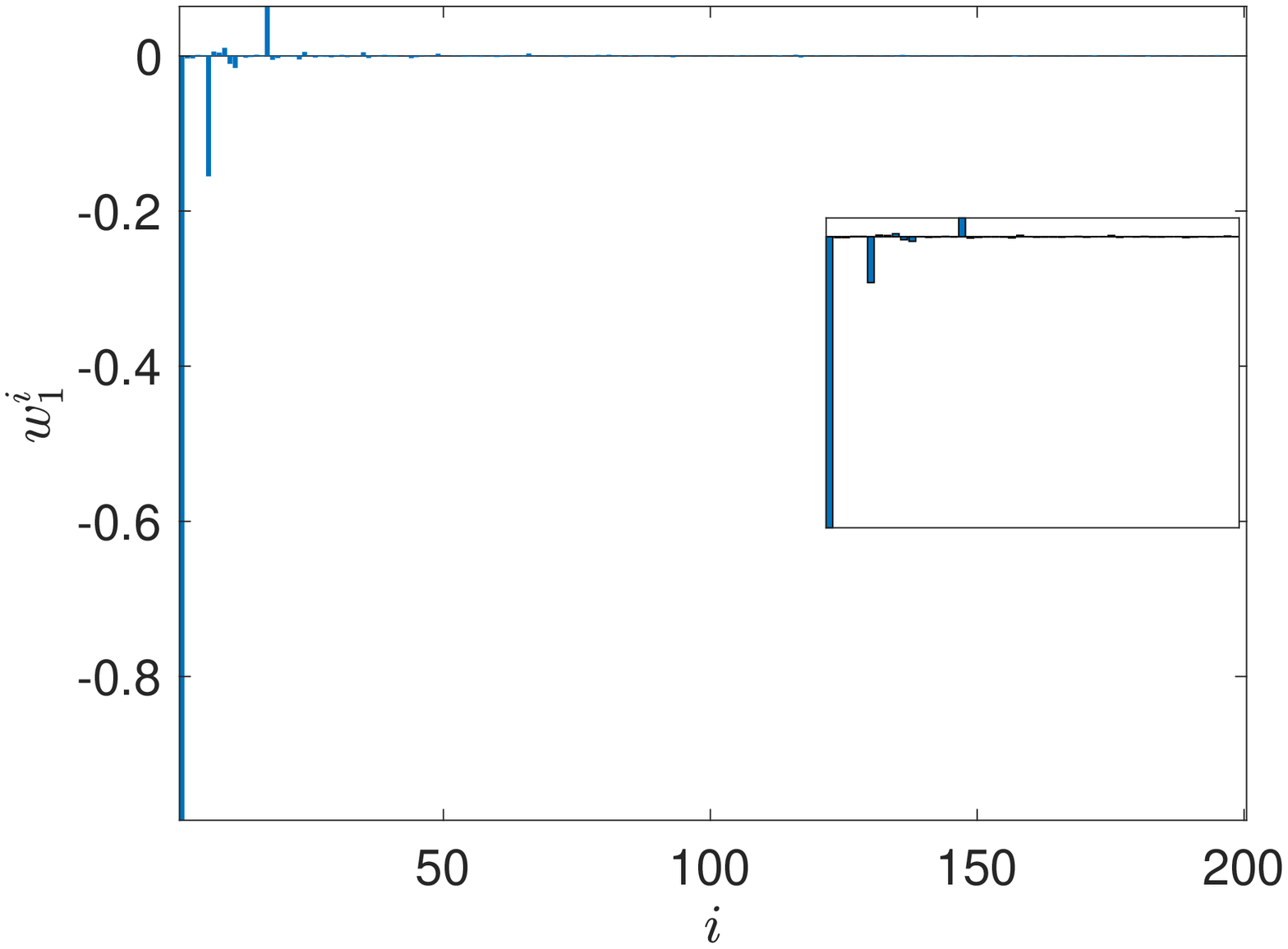} 
                        \includegraphics[width=.32\textwidth]{./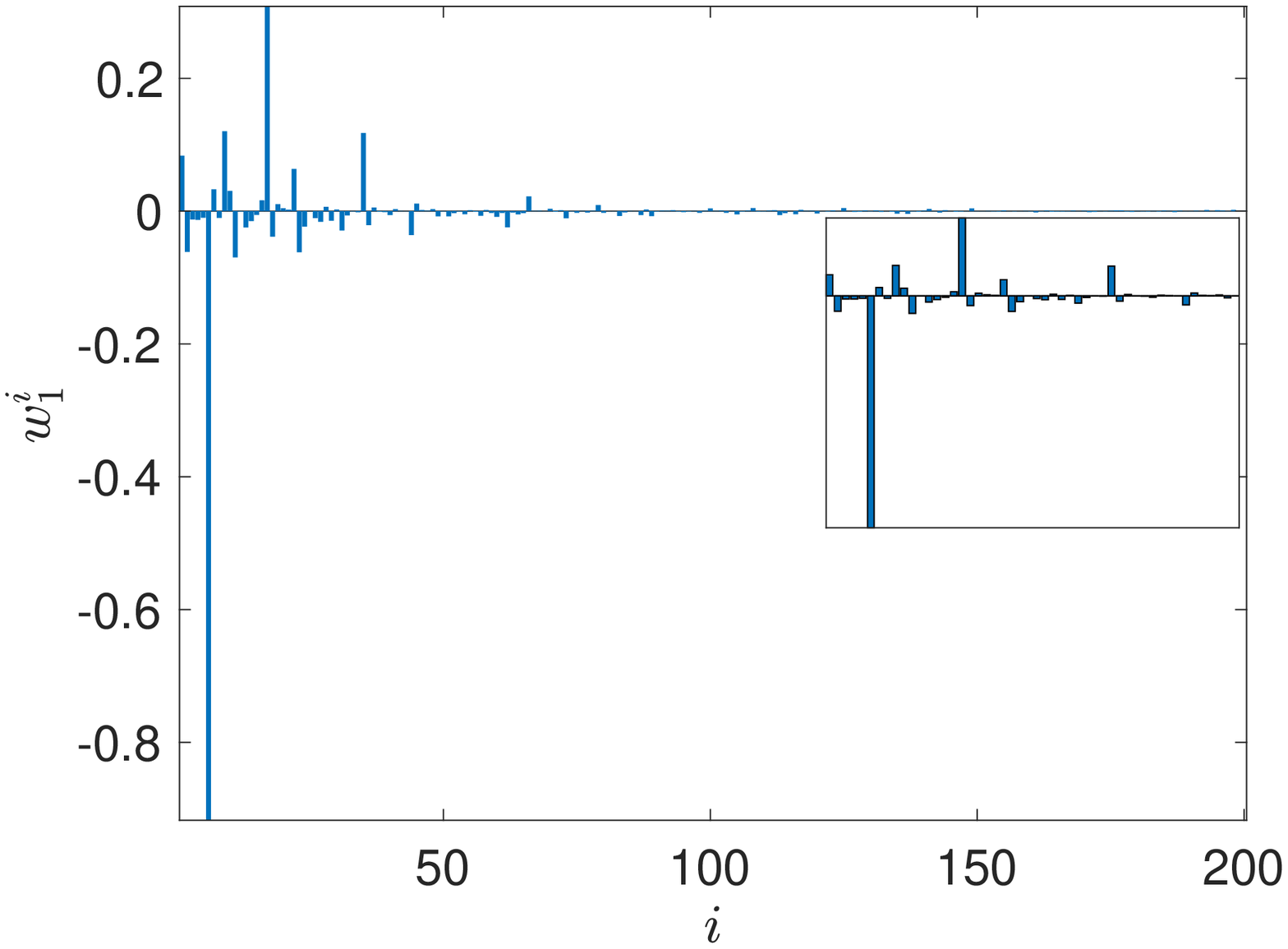} 
                        \includegraphics[width=.32\textwidth]{./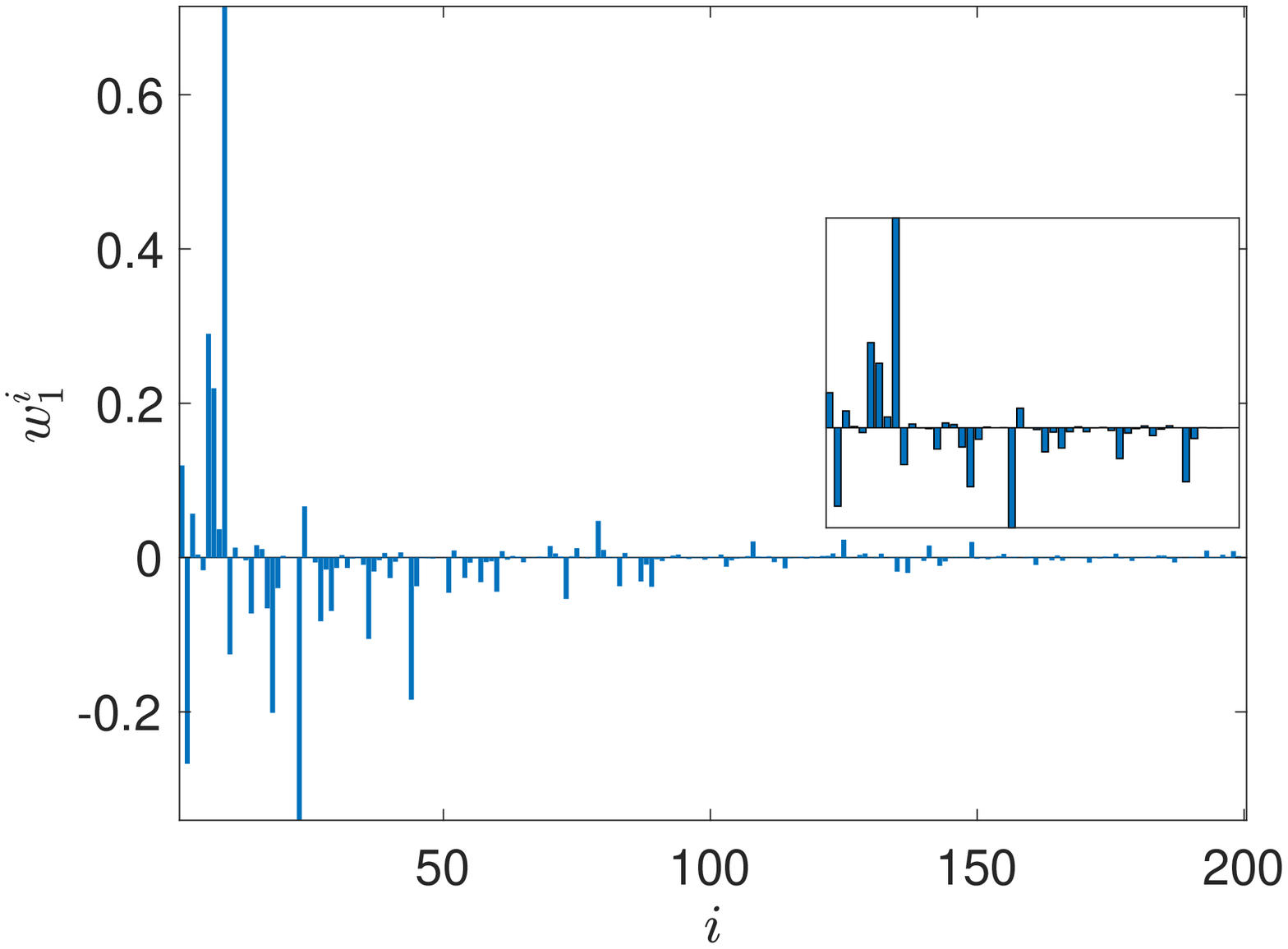} 
        \caption{Components of the dominant eigenvector $\vec{w}$ for the output KL modes, 
        $p_1$, $p_2$, and $p_3$ (left, middle, and right images, respectively). The inset plot
        shows the first 50 components in each case.}
        \label{fig:evecs}
\end{figure}

Next, we compute the surrogate models for the output KL  modes, 
following the strategy described in Section~\ref{sec:method}.
The surrogate models $\tilde{G}_k(\mat{W}_{k,1}^T \vec\xi) 
\approx p_k(\vec\xi)$, $k = 1, \ldots, N$ are
constructed by regression fit. Specifically, letting $p_{i,k}=p_k(\vec\xi_i)$
and $y_i = \mat{W}_{k,1}^T \vec\xi$, 
$i = 1, \ldots, N_s$, we compute least squares approximating polynomials
$\tilde{G}_k$ that minimize 
\[
\sum_{i=1}^{N_s}(\tilde{G}_k(y_i)-p_{i,k})^2, \quad k=1,\ldots,N.
\]
In our computations, we found that linear regressions fits were suitable for
the dominant output KL modes, except modes $1$, $11$, and $12$, for which we
used a quadratic fit. For illustration, we report the computed surrogate models
for the first three output modes in Figure~\ref{fig:eigs}~(bottom) and for the
mode $p_{12}$ in  Figure~\ref{fig:mode12}~(right).
We now have all the
pieces to form the surrogate model for the pressure field
$p(\vec{x},\vec{\xi})$: 
\begin{equation}\label{equ:distributed_surrogate}
p(\vec{x},\vec{\xi}) \approx \hat{p}(\vec{x},\vec{\xi}) = \bar{p}(\vec{x}) + \sum_{k=1}^{N} \sqrt{\lambda_k}\tilde{G}_k(\mat{W}_{k,1}^T\vec{\xi})\phi_k(\vec{x}).
\end{equation}
Below, we examine the accuracy of the computed surrogate model and 
show its effectiveness in capturing the statistical properties of the 
pressure field.

\subsection{The accuracy of the surrogate model}\label{sec:AS_Bio_Test}
In this section, we provide various tests of accuracy that examine different
aspects of the proposed method. To provide a baseline for comparision, 
we computed $10,000$ realizations of the exact pressure field and its 
surrogate model approximation. 

We begin by examining the success of the low-rank KL approximation of the 
pressure field in capturing the variance of the process. The variance of the pressure field
can be obtained from 
$\text{Var}(p(\vec{x},\vec{\xi})) = \mathbb{E}\{p(\vec{x},\vec{\xi})^2\}-\mathbb{E}\{p(\vec{x},\vec{\xi})\}^2$,
which we approximate using the computed samples of $p(\vec{x}, \vec\xi)$. The variance field 
for the truncated KL exansion of $p$ is determined completely by the spectral decomposition of
its (approximate) covariance operator:
\[
\text{Var}(\hat{p}(\vec{x},\vec{\xi})) = \sum_{k=1}^{N}\lambda_k\phi_k(\vec{x})^2,
\]
where $\lambda_k$ and
$\phi_k(\vec{x})$ are the eigenvalues and eigenvectors of the covariance
operator $C_p$.  Taking the square root we obtain the standard deviation of
both the exact pressure field and its approximation; results are shown in
Figure \ref{fig:std}. We note that for both, the standard deviation is
highest at the center of the tumor and decreases to zero as we move away from
the center. We also observe that 
even a low-rank approximation to $p(\vec{x}, \vec\xi)$ captures the standard 
deviation of the pressure field well. 
\begin{figure}[h!]
	\begin{center}
		\begin{tabular}{cccc}
			\includegraphics[width=50mm]{./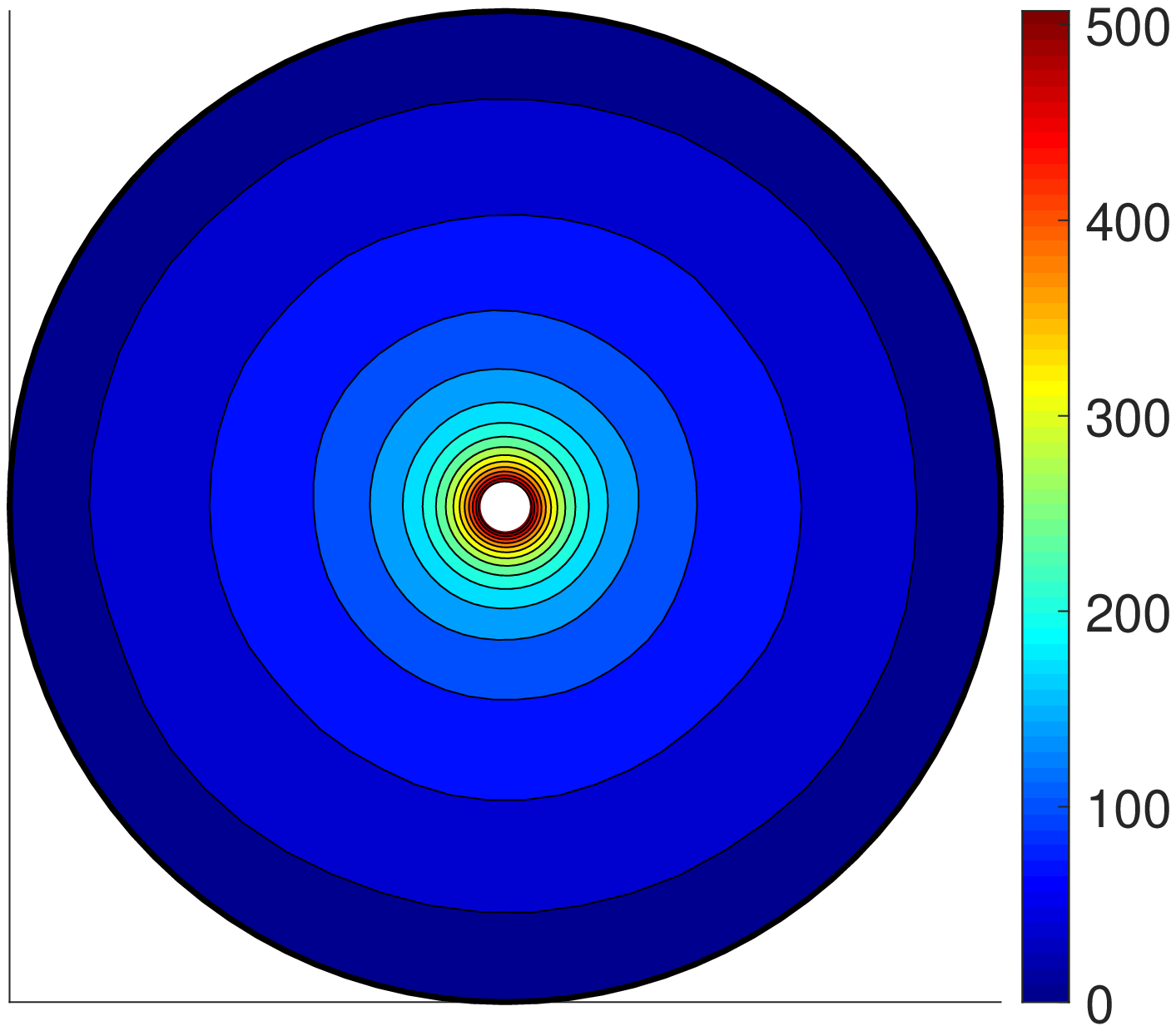} &   
			\includegraphics[width=50mm]{./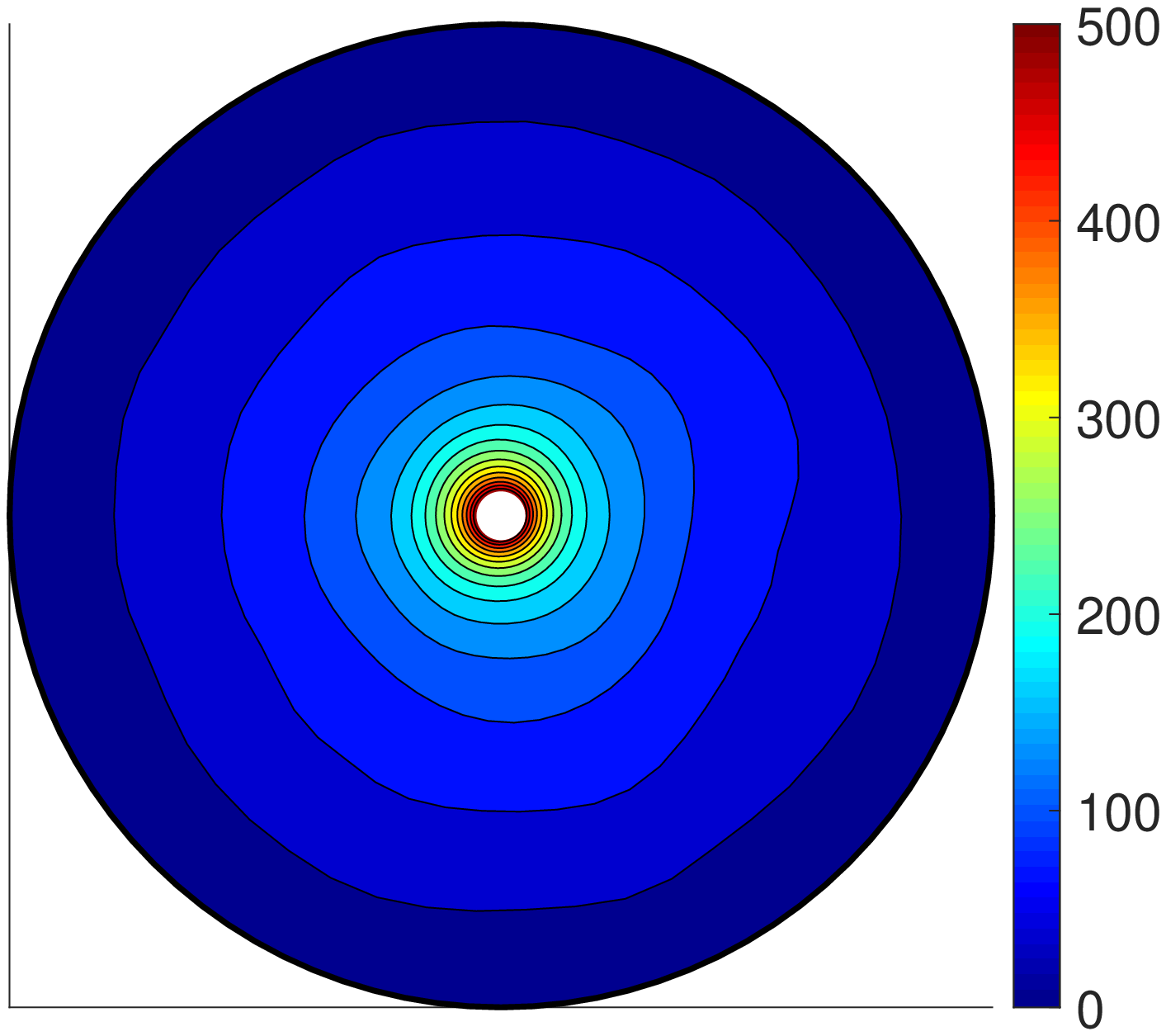} &
			\includegraphics[width=50mm]{./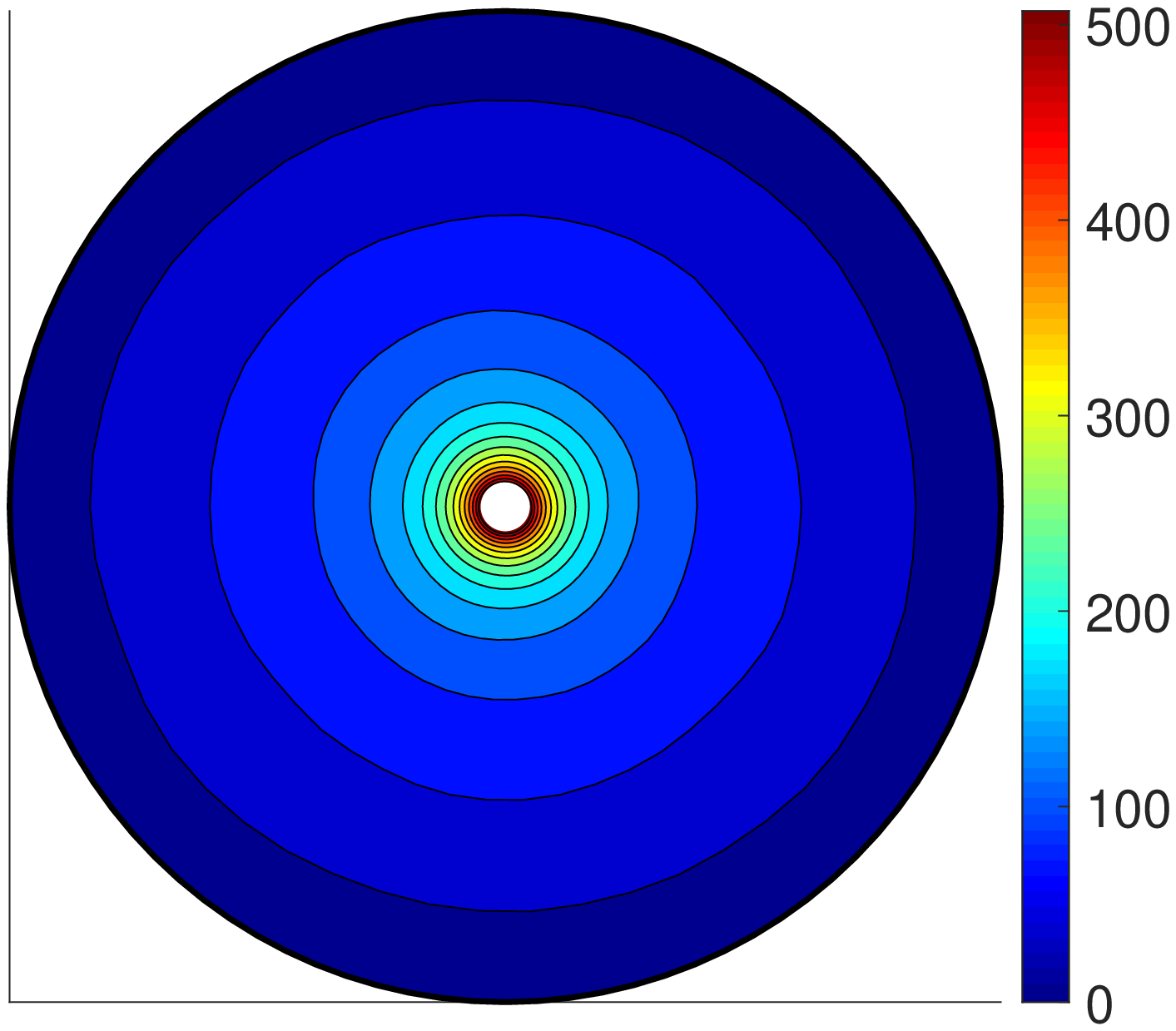} \\
		\end{tabular}
	\end{center}
	\caption{(Left) standard deviation of $p(\vec{x},\vec{\xi})$ using sampling, (middle) standard deviation of $p(\vec{x},\vec{\xi})$ using KLE (15 modes), and (right) standard deviation of $p(\vec{x},\vec{\xi})$ using KLE (50 modes).}
	\label{fig:std}
\end{figure}

Next, we study the accuracy of the computed surrogate model by focusing on 
the relative error indicator 
\[
    E_{rel}(\vec\xi) = 
    \frac{\int_\D | p(\vec{x}, \vec\xi) - \hat{p}(\vec{x}, \vec\xi)|\, d\vec{x}}
         {\int_\D | p(\vec{x}, \vec\xi)\, d\vec{x}}, 
\]
where $\hat{p}(\vec{x}, \vec\xi)$ is computed according
to~\eqref{equ:distributed_surrogate}.  In Figure~\ref{fig:error_dist}~(left),
we plot the expected value of $E_{rel}$ as the number of output KL modes increases; the
dashed lines indicate the fifth and ninty fifth percentiles. We note that with
$N = 15$ output KL modes, the relative error is about $5$\% on
average.  To better understand the behavior of the relative error, we report
its distribution in Figure~\ref{fig:error_dist}~(right), where we used $N = 15$
output KL modes in computing $\hat{p}$.  We also show the distribution of the
relative error computed over a subdomin $\D' = \{ \vec{x} \in \D : \| \vec{x}\|
\leq 2\}$ in Figure~\ref{fig:error_dist}~(right).  This is done to quantify the
approximation errors near the injection site. We observe that the distribution of the
error is shifted to the left, indicating smaller approximation errors in $\mathcal{D}'$
(with high probability).
To see more clearly how the
distribution of the relative error evolves as the number of KL modes
increase, we show the probability density 
function of $E_{rel}$ corresponding to different number of output KL modes in Figure~\ref{fig:error_pdf_surf}.
Note that not only does the mode of the distribution get smaller, the spread of the
error also decreases, which can be inferred from Figure~\ref{fig:error_dist}~(left) as well.
\begin{figure}[ht]\centering
\includegraphics[width=.4\textwidth]{./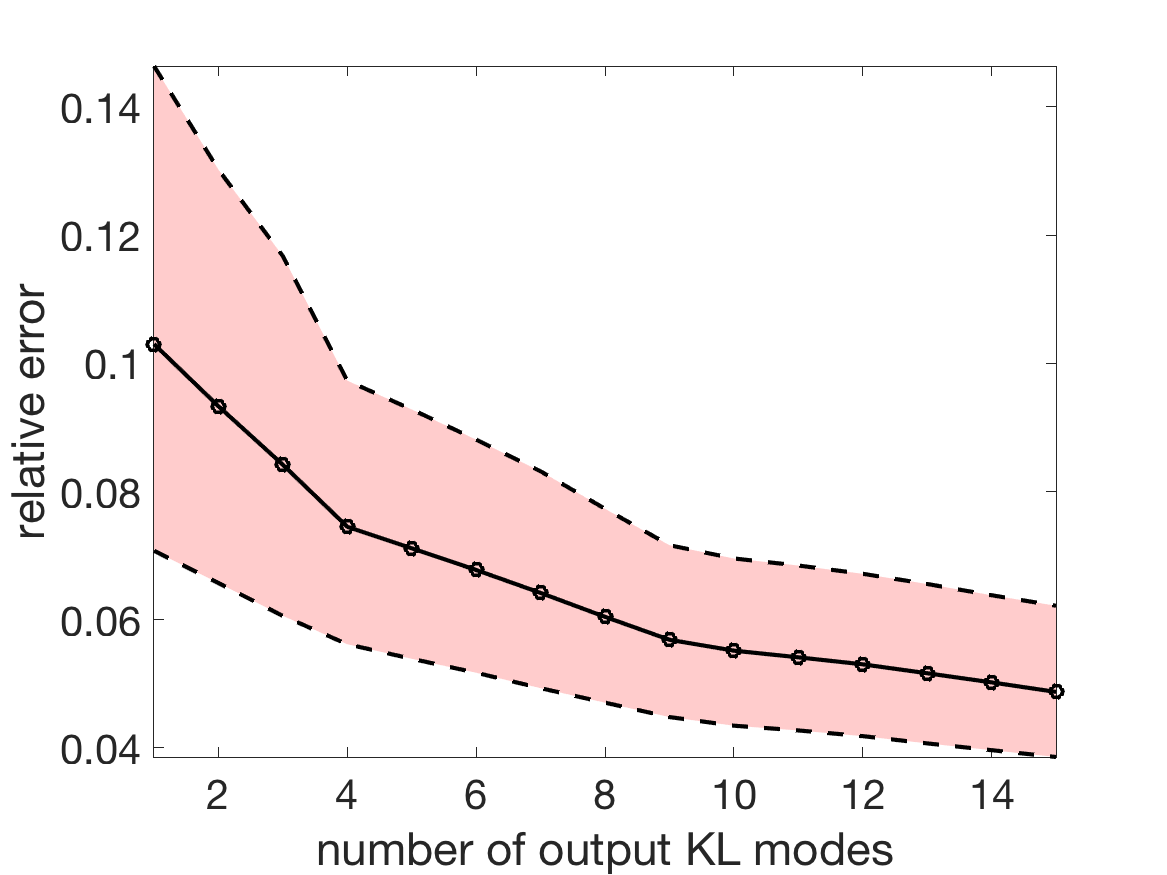}
\includegraphics[width=.4\textwidth]{./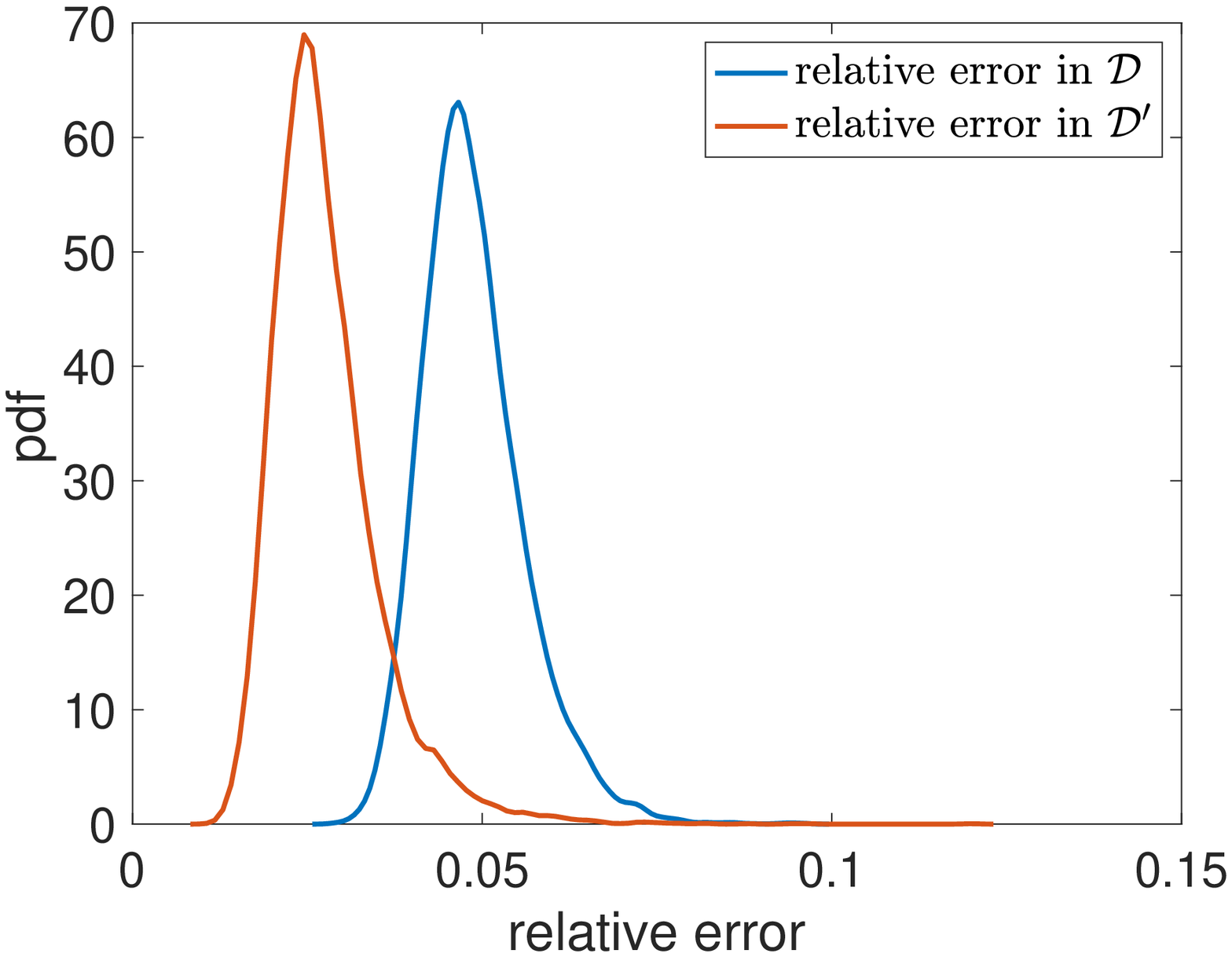}
\caption{Left: The expected value of the error $E_{rel}$ (solid black line)
along with the fifth and the ninty fifth percentiles (dashed lines) of 
$E_{rel}$. Right: Distribution of $E_{rel}$ computed over the entire domain $\D$ (blue)
and over the subdomian $\D'$ (red).} 
\label{fig:error_dist}
\end{figure}
\begin{figure}[ht]\centering
\includegraphics[width=.5\textwidth]{./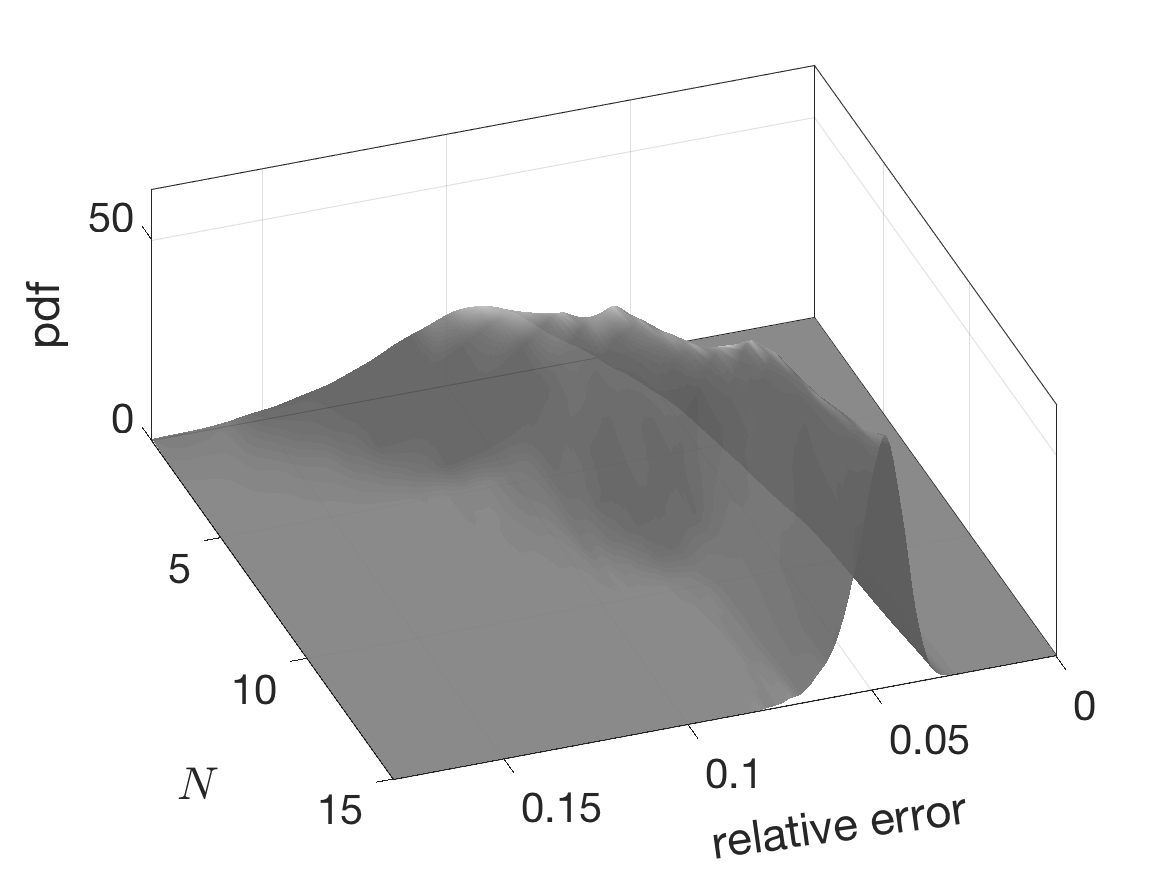}
\caption{Distribution of the error $E_{rel}$ as the number of 
output KL modes increase from $N = 1$ to $N = 15$.}
\label{fig:error_pdf_surf}
\end{figure}

In Figure~\ref{fig:pdfs}, we report the distribution of the average pressure
along concentric circles of various radii.  This shows that the computed
surrogate captures statistical properties of the pressure well in
different parts of the domain.
\begin{figure}[ht]
\includegraphics[width=.3\textwidth]{./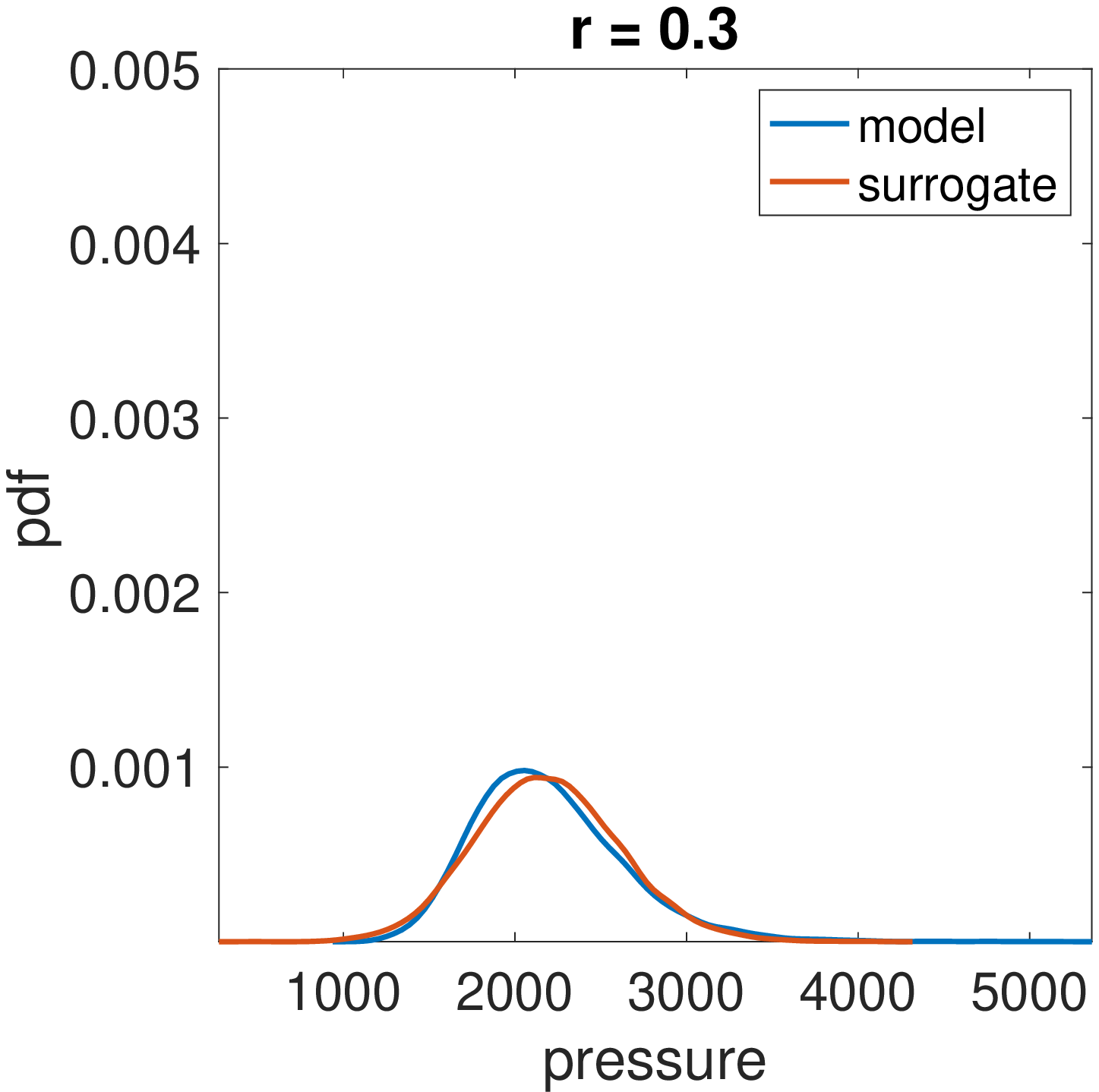}
\includegraphics[width=.3\textwidth]{./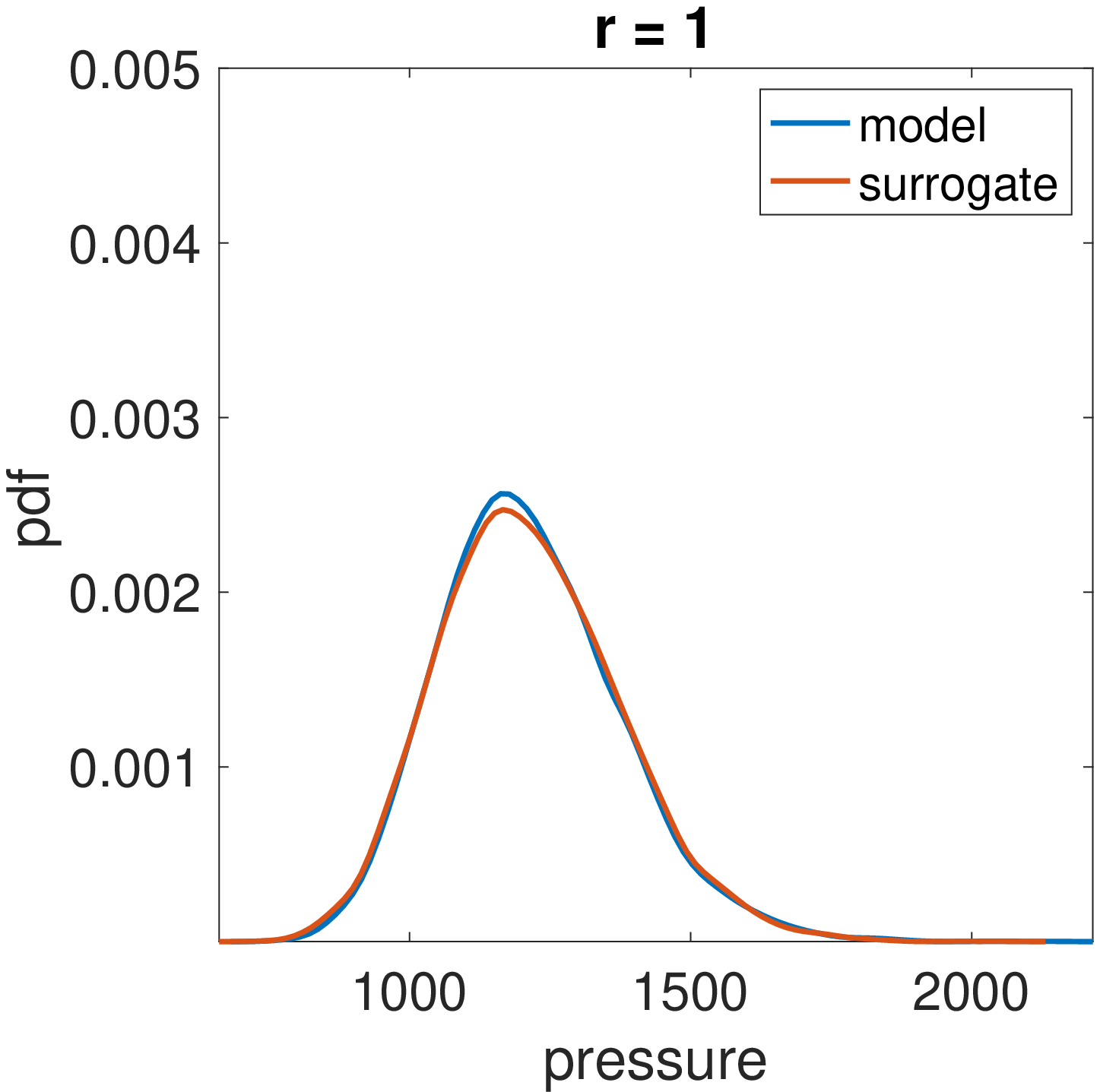}
\includegraphics[width=.3\textwidth]{./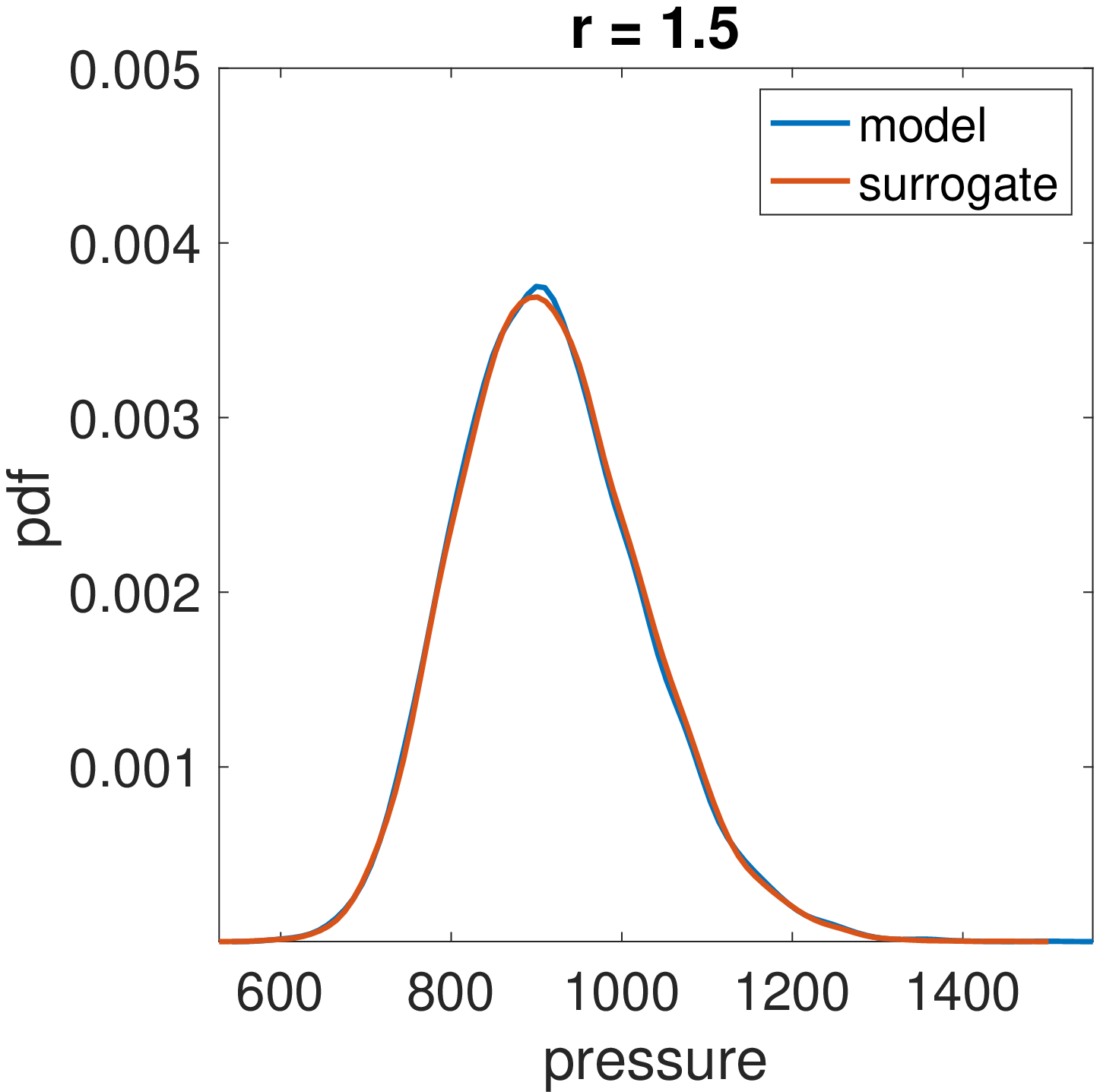}
\\
\includegraphics[width=.3\textwidth]{./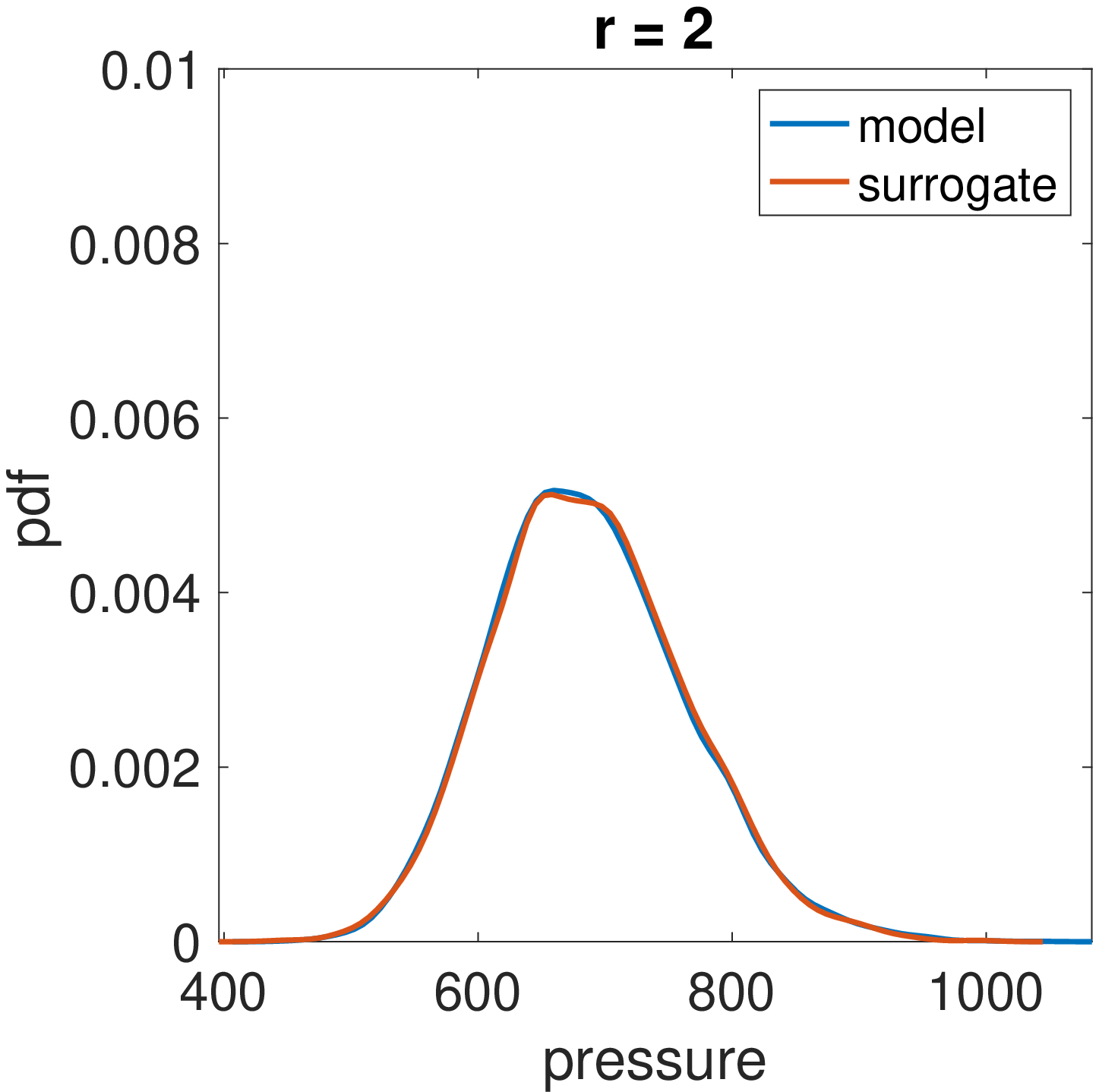}
\includegraphics[width=.3\textwidth]{./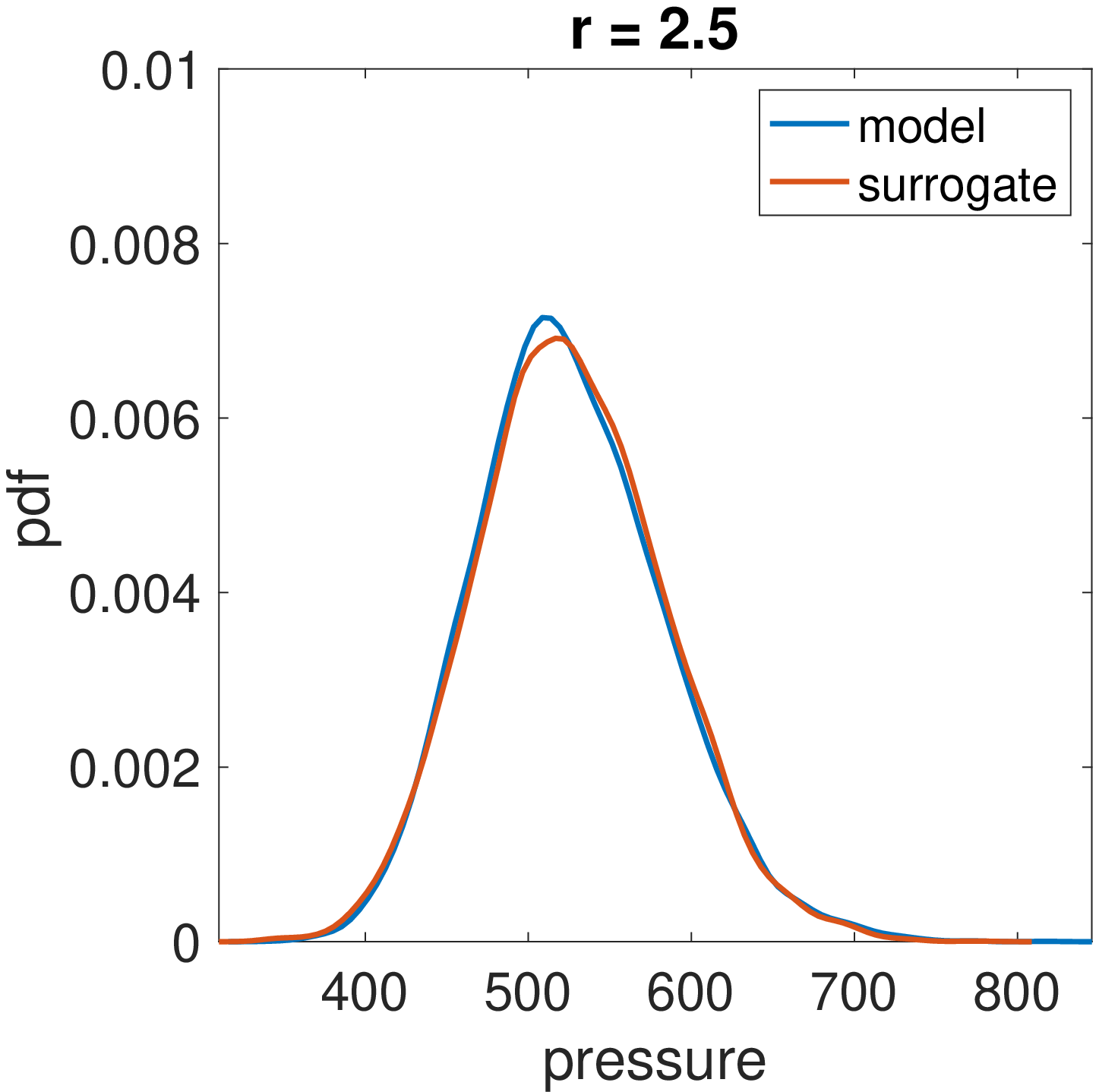}
\includegraphics[width=.3\textwidth]{./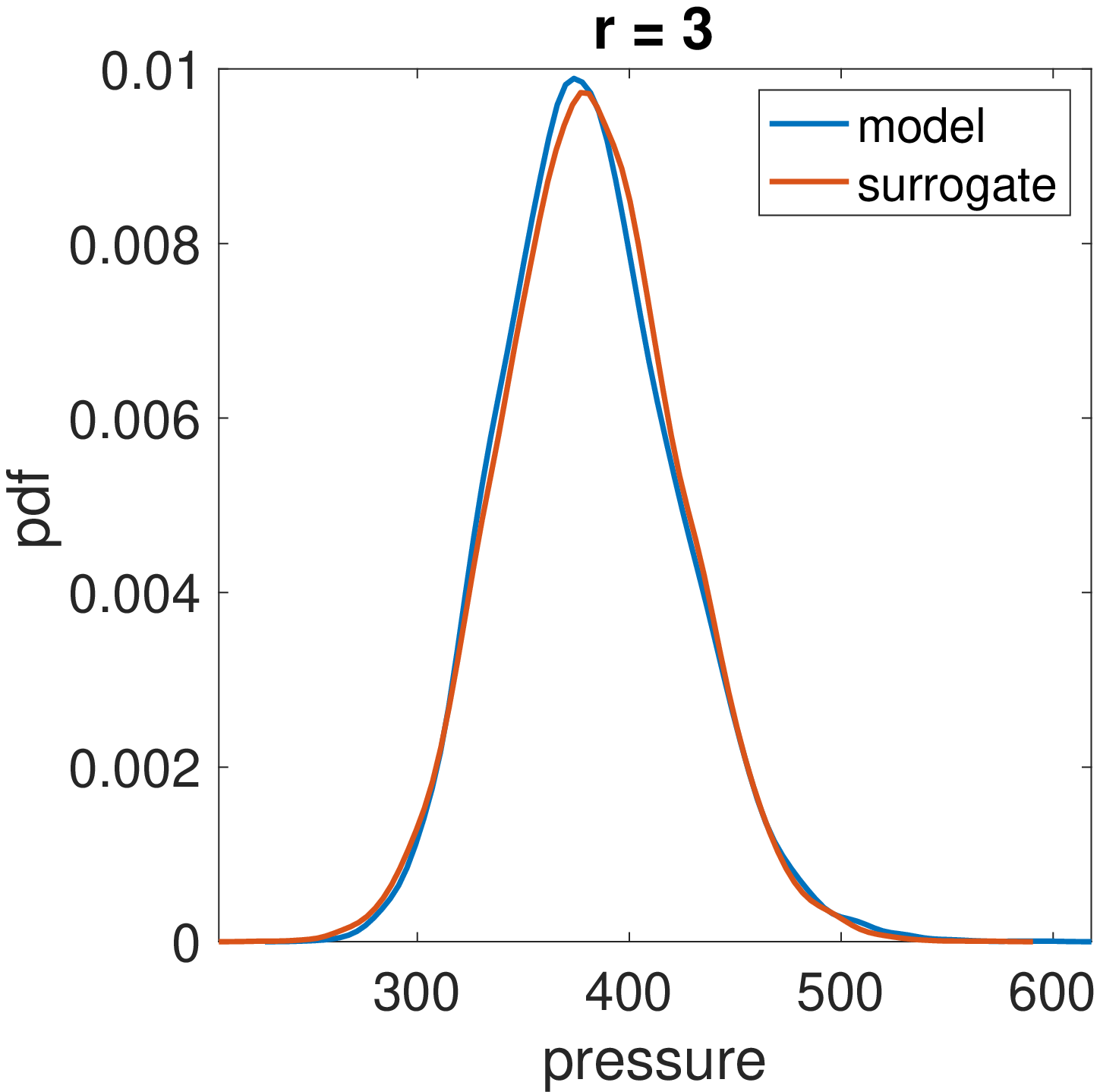}
\caption{Distribution of the pressure averaged over concentric circles of 
various radii.}
\label{fig:pdfs}
\end{figure}

\section{Concluding remarks}
\label{sec:conc}
We have presented a distributed active subspace method for scalable surrogate
modeling of PDE-governed physical processes with high-dimensional inputs and
function valued outputs. To save the modeling efforts spent on function valued
outputs, we employ the truncated KL expansion to decouple the spatial
dimensions with those of the random variables. As a result, the randomness in
outputs is fully represented by scalar valued KL modes. For elliptic PDEs, as
observed in the work, a low-rank KL representation of the model output is usually
sufficient for an accurate representation. To reduce the dimension of
inputs, we construct active subspaces for each of the
dominant output KL modes. Since output gradients with respect to the random
variables need to be calculated when constructing active subspaces, we develop
an adjoint-based framework to ensure that the computational cost
does not scale with the input dimension. The method development is complemented
by a rigorous mathematical formulation as well as theoretical analysis of 
errors due to active subspace projection and output KL 
representation.

We then deploy the distributed active subspace method to conduct surrogate
modeling of the pressure field in a biotransport model in tumors with an
uncertain permeability field.  We demonstrate that a low-rank representation of
the pressure field (i.e., output) can be achieved with the truncated KL
expansion.  The input (i.e., random variables used to represent the uncertain
log-permeability field) dimension can be very high (e.g., several hundreds)
when the correlation length of the log-permeability field is small. However, we
observe that dominant output KL modes admit low (one or two) dimensional active
subspaces.  We observe that the average relative error between the surrogate
model and the PDE solution can  be controlled under 5\% even when a very
low-rank representation of the outputs is employed. We also show that the
surrogate models can capture statistical properties of the pressure well in
different parts of the domain.      

In future work, we will investigate extensions of the distributed active
subspace method to surrogate modeling of the time-dependent diffusion and
convection-diffusion processes in biological and geological flows. We envision
that the truncated KL expansion can be used to extract spatiotemporal coherent
structures from the physical process, and the low-dimensional active subspaces
can then be constructed for the scalar valued KL modes associated with these
structures. This surrogate modeling approach will contribute to cost-effective
forward uncertainty quantification, and facilitate solution of inverse problems
under uncertainty, such as Bayesian inversion of tissue material properties
from medical images.

\bibliographystyle{spmpsci}
\bibliography{refs}

\newcommand{\noop}[1]{}
\begin{thebibliography}{10}
\providecommand{\url}[1]{{#1}}
\providecommand{\urlprefix}{URL }
\expandafter\ifx\csname urlstyle\endcsname\relax
  \providecommand{\doi}[1]{DOI~\discretionary{}{}{}#1}\else
  \providecommand{\doi}{DOI~\discretionary{}{}{}\begingroup
  \urlstyle{rm}\Url}\fi

\bibitem{AlexanderianGremaudSmith17}
Alexanderian, A., Gremaud, P., Smith, R.: Variance-based sensitivity analysis
  for time-dependent processes.
\newblock In review (arXiv: \url{https://arxiv.org/abs/1711.08030})  (2019)

\bibitem{ARSY2019}
{Alexanderian}, A., {Reese}, W., {Smith}, R.C., {Yu}, M.: {Model input and
  output dimension reduction using {K}arhunen--{L}o\`{e}ve expansions with
  application to biotransport}.
\newblock ASCE-ASME Journal of Risk and Uncertainty in Engineering Systems Part
  B: Mechanical Engineering \textbf{Accepted} (2019).
\newblock \url{https://ui.adsabs.harvard.edu/\#abs/2019arXiv190306314A}

\bibitem{AlexanderianZhuSalloumEtAl17}
Alexanderian, A., Zhu, L., Salloum, M., Ma, R., Yu, M.: Investigation of
  biotransport in a tumor with uncertain material properties using a
  non-intrusive spectral uncertainty quantification method.
\newblock J. Biomech. Eng. pp. 091006--1--091006--11 (2017)

\bibitem{BabuvskaNobileTempone07}
Babu{\v{s}}ka, I., Nobile, F., Tempone, R.: A stochastic collocation method for
  elliptic partial differential equations with random input data.
\newblock SIAM Journal on Numerical Analysis \textbf{45}(3), 1005--1034 (2007)

\bibitem{Clark:91}
Clark, W.H.: {Tumour progression and the nature of cancer}.
\newblock Br J Cancer \textbf{64}, 631--44 (1991)

\bibitem{Chen:07}
Clark, W.H.: {Biphasic finite element model of solute transport for direct
  infusion into nervous tissue}.
\newblock Annals of Biomedical Engineering \textbf{35}, 2145–--2158 (2007)

\bibitem{CleavesAlexanderianGuyEtAl19}
{Cleaves}, H.L., {Alexanderian}, A., {Guy}, H., {Smith}, R.C., {Yu}, M.:
  {Derivative-based global sensitivity analysis for models with
  high-dimensional inputs and functional outputs}.
\newblock arXiv e-prints arXiv:1902.04630 (2019)

\bibitem{Constantine15}
Constantine, P.: Active Subspaces: Emerging Ideas in Dimension Reduction for
  Parameter Studies.
\newblock SIAM, Philadelphia (2015)

\bibitem{ConstantineDiaz17}
Constantine, P.G., Diaz, P.: Global sensitivity metrics from active subspaces.
\newblock Reliability Engineering \& System Safety \textbf{162}, 1--13 (2017)

\bibitem{ConstantineBattery17}
Constantine, P.G., Doostan, A.: Time‐dependent global sensitivity analysis
  with active subspaces for a lithium ion battery model.
\newblock Statistical Analysis and Data Mining: The ASA Data Science Journal
  \textbf{10}, 243--262 (2017)

\bibitem{constantine2014}
Constantine, P.G., Dow, E., Wang, Q.: Active subspace methods in theory and
  practice: applications to kriging surfaces.
\newblock SIAM Journal on Scientific Computing \textbf{36}(4), A1500--A1524
  (2014)

\bibitem{Scramjet15}
Constantine, P.G., Emory, M., Larsson, J., Iaccarino, G.: Exploiting active
  subspaces to quantify uncertainty in the numerical simulation of the hyshot
  ii scramjet.
\newblock Journal of Computational Physics \textbf{302}, 1--20 (2015)

\bibitem{Deb:09CPD}
Debbage, P.: Targeted drugs and nanomedicine: present and future.
\newblock Current Pharmaceutical Design \textbf{15}, 153--72 (2009)

\bibitem{Doostan07}
Doostan, A., Ghanem, R.G., Red-Horse, J.: Stochastic model reduction for chaos
  representations.
\newblock Computer Methods in Applied Mechanics and Engineering
  \textbf{196}(37-40), 3951--3966 (2007)

\bibitem{Elman17}
Elman, H.: Solution algorithms for stochastic galerkin discretizations of
  differential equations with random data.
\newblock Handbook of Uncertainty Quantification pp. 1--16 (2017)

\bibitem{friedman93}
{Friedman}, J.: Fast {MARS}.
\newblock Tech. Rep. 110, Laboratory for Computational Statistics, Department
  of Statistics, Stanford University (1993)

\bibitem{GamboaJanonKleinEtAl14}
Gamboa, F., Janon, A., Klein, T., Lagnoux, A., et~al.: Sensitivity analysis for
  multidimensional and functional outputs.
\newblock Electronic Journal of Statistics \textbf{8}(1), 575--603 (2014)

\bibitem{Ghanem98}
Ghanem, R.: Probabilistic characterization of transport in heterogeneous media.
\newblock Computer Methods in Applied Mechanics and Engineering
  \textbf{158}(3), 199 -- 220 (1998).
\newblock \doi{https://doi.org/10.1016/S0045-7825(97)00250-8}.
\newblock
  \urlprefix\url{http://www.sciencedirect.com/science/article/pii/S0045782597002508}

\bibitem{GhanemSpanos90}
Ghanem, R.G., Spanos, P.D.: Stochastic finite elements: a spectral approach.
\newblock Springer-Verlag, New York (1991).
\newblock \doi{10.1007/978-1-4612-3094-6}.
\newblock \urlprefix\url{http://dx.doi.org/10.1007/978-1-4612-3094-6}

\bibitem{Graham15KuoKuoNicholsEtAl15}
Graham, I.G., Kuo, F.Y., Nichols, J.A., Scheichl, R., Schwab, C., Sloan, I.H.:
  Quasi-monte carlo finite element methods for elliptic pdes with lognormal
  random coefficients.
\newblock Numerische Mathematik \textbf{131}(2), 329--368 (2015)

\bibitem{Gunzburger03}
Gunzburger, M.: Perspectives in flow control and optimization, vol.~5.
\newblock Siam (2003)

\bibitem{HsingEubank15}
Hsing, T., Eubank, R.: Theoretical foundations of functional data analysis,
  with an introduction to linear operators.
\newblock John Wiley \& Sons (2015)

\bibitem{IoossSaltelli17}
Iooss, B., Saltelli, A.: Introduction to sensitivity analysis.
\newblock In: R.~Ghanem, D.~Higdon, H.~Owhadi (eds.) Handbook of uncertainty
  quantification, pp. 1103--1122. Springer (2017)

\bibitem{ConstantineGeo15}
Jefferson, J., Gilbert, J., Constantine, P., Maxwell, R.: Active subspaces for
  sensitivity analysis and dimension reduction of an integrated hydrologic
  model.
\newblock Computers \& Geosciences \textbf{83}, 127--138 (2015)

\bibitem{ji2018shared}
Ji, W., Wang, J., Zahm, O., Marzouk, Y.M., Yang, B., Ren, Z., Law, C.K.: Shared
  low-dimensional subspaces for propagating kinetic uncertainty to multiple
  outputs.
\newblock Combustion and Flame \textbf{190}, 146--157 (2018)

\bibitem{KucherenkoIooss17}
Kucherenko, S., Iooss, B.: Derivative-based global sensitivity measures.
\newblock In: R.~Ghanem, D.~Higdon, H.~Owhadi (eds.) Handbook of Uncertainty
  Quantification. Springer (2017)

\bibitem{LeMaitreKnio04}
Le~Ma{\i}tre, O., Knio, O., Najm, H., Ghanem, R.: Uncertainty propagation using
  wiener--haar expansions.
\newblock Journal of computational Physics \textbf{197}(1), 28--57 (2004)

\bibitem{LeMaitreKnio10}
Le~Ma{\^{\i}}tre, O.P., Knio, O.M.: Spectral methods for uncertainty
  quantification.
\newblock Scientific Computation. Springer, New York (2010).
\newblock \doi{10.1007/978-90-481-3520-2}.
\newblock \urlprefix\url{http://dx.doi.org/10.1007/978-90-481-3520-2}.
\newblock With applications to computational fluid dynamics

\bibitem{LeMaitreReaganNajmEtAl02}
Le~Ma{\^i}tre, O.P., Reagan, M.T., Najm, H.N., Ghanem, R.G., Knio, O.M.: A
  stochastic projection method for fluid flow: Ii. random process.
\newblock Journal of computational Physics \textbf{181}(1), 9--44 (2002)

\bibitem{Loeve77}
Lo{\`e}ve, M.: Probability theory. {I}, fourth edn.
\newblock Springer-Verlag, New York-Heidelberg (1977).
\newblock Graduate Texts in Mathematics, Vol. 45

\bibitem{ShapeOpt14}
Lukaczyk, T.W., Palacios, F., Alonso, J.J., Constantine, P.: Active subspaces
  for shape optimization.
\newblock In: the 10th AIAA Multidisciplinary Design Optimization Conference.
  National Harbor, Maryland (2014).
\newblock AIAA-2014-1171

\bibitem{ma:12TF}
Ma, R., Su, D., Zhu, L.: {Multiscale simulation of nanopartical transport in
  deformable tissue during an infusion process in hyperthermia treatments of
  cancers.}
\newblock In: W.J. Minkowycz, E.~Sparrow, J.P. Abraham (eds.) Nanoparticle Heat
  Transfer and Fluid Flow, Computational \& Physical Processes in Mechanics \&
  Thermal Science Series, vol.~4. CRC Press, Taylor \& Francis Group (2012)

\bibitem{Frontier:16}
Mangado, N., Piella, G., Noailly, J., Pons-Prats, J., Ángel
  González~Ballester, M.: Analysis of uncertainty and variability in finite
  element computational models for biomedical engineering: Characterization and
  propagation.
\newblock Front Bioeng Biotechnol. \textbf{4}, 85 (2016)

\bibitem{MatthiesKeese05}
Matthies, H.G., Keese, A.: Galerkin methods for linear and nonlinear elliptic
  stochastic partial differential equations.
\newblock Computer methods in applied mechanics and engineering
  \textbf{194}(12-16), 1295--1331 (2005)

\bibitem{PrieurTarantola17}
Prieur, C., Tarantola, S.: Variance-based sensitivity analysis: Theory and
  estimation algorithms.
\newblock In: R.~Ghanem, D.~Higdon, H.~Owhadi (eds.) Handbook of Uncertainty
  Quantification, pp. 1217--1239. Springer (2017)

\bibitem{RasmussenWilliams06}
Rasmussen, C.E., Williams, C.: Gaussian processes for machine learning
  cambridge (2006)

\bibitem{Russi10}
Russi, T.M.: Uncertainty quantification with experimental data and complex
  system models.
\newblock Ph.D. thesis, University of California, Berkeley (2010)

\bibitem{SaadGhanem09}
Saad, G., Ghanem, R.: Characterization of reservoir simulation models using a
  polynomial chaos-based ensemble kalman filter.
\newblock Water Resources Research \textbf{45}(4) (2009)

\bibitem{maher:08}
Salloum, M., Ma, R., Weeks, D., Zhu, L.: {Controlling nanoparticle delivery in
  magnetic nanoparticle hyperthermia for cancer treatment: experimental study
  in agarose gel}.
\newblock Int. J. Hyperthermia \textbf{24}, 337--345 (2008)

\bibitem{Sobol:1990}
Sobol, I.: Estimation of the sensitivity of nonlinear mathematical models.
\newblock Matematicheskoe Modelirovanie \textbf{2}(1), 112--118 (1990)

\bibitem{Sobol:2001}
Sobol, I.: Global sensitivity indices for nonlinear mathematical models and
  their monte carlo estimates.
\newblock Mathematics and Computers in Simulation \textbf{55}(1-3), 271 -- 280
  (2001)

\bibitem{SobolKucherenko09}
Sobol', I., Kucherenko, S.: Derivative based global sensitivity measures and
  their link with global sensitivity indices.
\newblock Mathematics and Computers in Simulation \textbf{79}(10), 3009--3017
  (2009)

\bibitem{XiuKarniadakis03}
Xiu, D., Karniadakis, G.E.: Modeling uncertainty in flow simulations via
  generalized polynomial chaos.
\newblock Journal of Computational Physics \textbf{187}(1), 137 -- 167 (2003).
\newblock \doi{https://doi.org/10.1016/S0021-9991(03)00092-5}.
\newblock
  \urlprefix\url{http://www.sciencedirect.com/science/article/pii/S0021999103000925}

\bibitem{zahm2018gradient}
Zahm, O., Constantine, P., Prieur, C., Marzouk, Y.: Gradient-based dimension
  reduction of multivariate vector-valued functions.
\newblock arXiv preprint arXiv:1801.07922  (2018)

\end{thebibliography}

\appendix
\section{Proofs}\label{sec:proofs}

\begin{myproof}{Theorem}{\ref{thm:fN_error}}
Note that,
\[
e(\vec{x},\vec{\xi}) 
	=\bigg| \sum_{k=1}^N (F_k(\vec{\xi}) -{G}_k(\vec{W}_{k,1}^T\vec{\xi}))\phi_k(\vec{x}) \bigg|
	 \leq \sum_{k=1}^N |F_k(\vec{\xi}) -{G}_k(\vec{W}_{k,1}^T\vec{\xi})||\phi_k(\vec{x})|. 
\]
Then,
\begin{align*}
\mathbb{E}\{\bar{e}\} &= 
\mathbb{E} \bigg\{\int_\X \sum_{k=1}^N |F_k(\vec{\xi}) -{G}_k(\vec{W}_{k,1}^T\vec{\xi})||\phi_k(\vec{x})| dx \bigg\}\\ 
&= \sum_{k=1}^N  \bigg(\int_\X |\phi_k(\vec{x})| dx \bigg) \mathbb{E}\{|F_k(\vec{\xi}) -{G}_k(\vec{W}_{k,1}^T\vec{\xi})| \}
\\
&\leq \displaystyle\sum_{k=1}^N  \left[\int_\X |\phi_k(\vec{x})|^2 dx\right]^\frac{1}{2}\left[\int_\X 1^2 dx\right]^\frac{1}{2} \mathbb{E}\{|F_k(\vec{\xi}) -{G}_k(\vec{W}_{k,1}^T\vec{\xi})^2|\}^{1/2} \leq |\X|^\frac{1}{2} \displaystyle\sum_{k=1}^N \delta_k
\end{align*}
where we have used Cauchy--Schwarz inequlity and Lemma~\ref{lem:basic}(a). 
\end{myproof}

\begin{myproof}{Theorem}{\ref{thm:overall_error}}
We have 
\begin{align*}	
\bar{E}(\vec{\xi}) = \int_\X E(\vec{x},\vec{\xi}) dx &= \int_\X |f(\vec{x},\vec{\xi})-f_N(\vec{x},\vec{\xi})+ f_N(\vec{x},\vec{\xi})-\hat{f}_N(\vec{x},\vec{\xi})| \\
&\leq \int_\X |f(\vec{x},\vec{\xi})-f_N(\vec{x},\vec{\xi})|dx + \int_\X |f_N(\vec{x},\vec{\xi})-\hat{f}_N(\vec{x},\vec{\xi})|dx \\ & \leq \bigg\{\int_\X |f(\vec{x},\vec{\xi})-f_N(\vec{x},\vec{\xi})|^2 dx\bigg\}^{1/2}|\X|^\frac{1}{2} + \bar{e}(\vec{\xi}) \\ & = \bigg\{ \sum_{k=N+1}^\infty F_k(\vec{\xi})^2 \bigg\}^{1/2}|\X|^\frac{1}{2} + \bar{e}(\vec{\xi}).
\end{align*}
Now we consider
\begin{align*}
\mathbb{E}\{\bar{E}(\vec{\xi})\}& 
\leq \mathbb{E}\Bigg(\bigg\{ \sum_{k=N+1}^\infty F_k(\vec{\xi})^2 \bigg\}^{1/2}\Bigg)|\X|^\frac{1}{2} + \mathbb{E}\{\bar{e}(\vec{\xi})\} \\ 
&\leq |\X|^\frac{1}{2} \Bigg[\mathbb{E}\Bigg(\bigg\{\sum_{k=N+1}^\infty F_k(\vec{\xi})^2\bigg\}^{1/2}\Bigg) +\sum_{k=1}^N \delta_k \Bigg] 
\leq |\X|^\frac{1}{2}\Bigg[\mathbb{E}\Bigg(\sum_{k=N+1}^\infty F_k(\vec{\xi})^2\Bigg)^{1/2}+\sum_{k=1}^N \delta_k  \Bigg].
\end{align*}
Note that 
$\mathbb{E}\Bigg(\sum_{k=N+1}^\infty F_k(\vec{\xi})^2\Bigg) = \sum_{k=N+1}^\infty \mathbb{E}\big(F_k(\vec{\xi})^2\big) = \sum_{k=N+1}^\infty \lambda_k(C_f)$.
Therefore, we have our desired result
\[
\begin{aligned}
\mathbb{E}\{\bar{E}(\vec{\xi})\} &\leq 
|\X|^\frac{1}{2}\Bigg[\Bigg(\sum_{k=N+1}^\infty \lambda_k(C_f)\Bigg)^{1/2} + \sum_{k=1}^N \delta_k\Bigg]\\
&=
|\X|^\frac{1}{2}\Bigg[\Bigg(\sum_{k=N+1}^\infty \lambda_k(C_f)\Bigg)^{1/2} + \sum_{k=1}^N 
\left(\sum_{j = r_k+1}^\Np \lambda_j(\mat{S}_k)\right)^{1/2}\Bigg].
\end{aligned}. 
\]
\end{myproof}

\end{document}